\title{An Algebraic Approach for Action Based Default Reasoning}
\author{Pablo F. Castro
\institute{Universidad Nacional de R\'io Cuarto}
\institute{CONICET, Argentina}
\email{\tt pcastro@dc.exa.unrc.edu.ar}
\and
Valentin Cassano
\institute{Universidad Nacional de C\'ordoba}
\institute{CONICET, Argentina}
\email{vcassano@famaf.unc.edu.ar}
\and
Raul Fervari
\institute{Universidad Nacional de C\'ordoba}
\institute{CONICET, Argentina}
\email{fervari@famaf.unc.edu.ar}
\and
Carlos Areces
\institute{Universidad Nacional de C\'ordoba}
\institute{CONICET, Argentina}
\email{areces@famaf.unc.edu.ar}
}
\begin{document}
\maketitle

\begin{abstract}
	Often, we assume that an action is permitted simply because it is not explicitly forbidden; or, similarly, that an action is forbidden simply because it is not explicitly permitted.
	This kind of assumptions appear, e.g., in autonomous computing systems where decisions must be taken in the presence of an incomplete set of norms regulating a particular scenario.
	Combining default and deontic reasoning over actions allows us to formally reason about such assumptions. 
	With this in mind, we propose a logical formalism for default reasoning over a deontic action logic.
	The novelty of our approach is twofold.
	First, our formalism for default reasoning deals with actions and action operators, and it is based on the deontic action logic originally proposed by Segerberg in~\cite{Segerberg1982}.
	Second, inspired by Segerberg's approach, we use tools coming from the theory of Boolean Algebra.
	These tools allow us to extend Segerberg's algebraic completeness result to the setting of Default Logics. 
\end{abstract}


\section{Introduction} 
\label{sec:intro}

The study of norms and their logical rules enjoys a renewed interest in the areas of Computer Science, Software Engineering, and Artificial Intelligence.
This renewed interest is  mainly due to the emergence of self-adaptive and autonomous systems such as self-driving cars, unmanned aerial vehicles, and flight-by-wire systems.
An important characteristic of these systems is that their behavior is regulated by a set of norms.
The following simple example illustrates this point.
A self-driving car cruising on a road ought to maintain a minimum speed and is forbidden to travel above a maximum speed.
Suppose now that the road is under construction; putting up traffic signs updating the minimum and maximum speeds requires the car to update its set of norms to account for the new road signs.
In addition, if some of the driving lanes are closed due to the construction work, some of the maneuvers that the car could take under normal road conditions may be restricted, e.g., passing a car is now forbidden. 
	
In this paper we propose a logical framework for reasoning about scenarios like the one described above. Our proposal is based on combining a \emph{deontic action logic} with \emph{default reasoning}.  
Deontic Logic, also called the logic of permission, prohibition, and obligation, has a rich history dating back to the pioneer work of von~Wright in \cite{vonWright:1951}.
At present, Deontic Logic encompasses a family of logical formalisms in which permission, prohibition, and obligation are captured as particular logical operators, commonly called deontic operators. 
In brief, deontic operators can roughly be categorized into so-called \emph{ought-to-be} and \emph{ought-to-do}~\cite{Aqvist:2002} depending on the kind of objects that they are applied to.
In the ought-to-be case, deontic logical operators are applied to propositions, e.g., \textit{the fence ought to be white}.
In ought-to-do case, deontic logical operators are applied to actions, e.g., \emph{killing is forbidden}.

Ought-to-do deontic operators prove to be more suitable when the focus of attention is on reasoning about actions and their deontological status.
Ought-to-do deontic operators typically have their origins in the work of Meyer in \cite{Meyer:1988} or in the work of Segerberg in \cite{Segerberg1982}.
In the first of these works, Meyer proposes to reduce deontic logical operators on actions to propositional dynamic logic and notions of violations. The resulting formalism is called
\emph{Dynamic Deontic Logic}.
In the second, Segerberg offers a novel definition of deontic logical operators on actions in set-theoretical terms using constructions coming from Boolean algebras. 
The resulting deontic logic is called \emph{Deontic Action Logic} (\DAL).
We favor \DAL because of its simplicity, and because its formalization of deontic operators relies on the theory of Boolean algebras, which enables us to obtain some desired results.

There are, however, some obstacles to the direct application of \DAL for reasoning in  scenarios such as the self-driving car discussed above.
First, some norms may need to be enacted in the presence of incomplete information --e.g., a self-driving car may tentatively assume that driving at a certain speed is permitted if no contrary information is known.
Second, some norms might be updated when more information is available, potentially overriding norms that are already in effect --e.g., in cases where the road is under construction.
To deal with these issues, we incorporate elements from \emph{default reasoning}.

Default reasoning is a non-monotonic formalism originally developed by Reiter in \cite{Reiter:1980}.
Default reasoning occupies a special place in its field due to its relatively simple syntax and semantics, its representation capabilities, and its relations to other non-monotonic formalisms (see, e.g.,~\cite{Antoniou:1997,Antoniou:2007}).
Our interest on default reasoning lays on the fact that it enables us to formally reason from tentative assumptions made in the absence of complete information, and that it can easily represent changing scenarios.
In this article, we show how default reasoning can be seamlessly integrated with \DAL.


\subsection{Contributions}
\label{subsec:contrib}

	First, we present a logical formalism to perform default reasoning over \DAL.
	As far as we are aware, our logic is the first of its kind. We consider that the formal tools provided by the combination of these two logics are useful for reasoning about self-adaptive and autonomous systems which have to fulfil different kinds of norms in partially described scenarios.
	Second, our formalization of default reasoning over \DAL is carried out by means of algebraic notions.
	This is in contrast to standard approaches to default reasoning, which are usually defined via meta-logical notions such as sets of consequences, maximal consistent sets, etc.
	Our formalization only relies on the algebraic semantics of the logic, and it is entirely constructed without resorting to meta-logical notions. A main benefit of this approach is that it enables us to extend the completeness result for \DAL presented in~\cite{Segerberg1982} to a completeness result for default reasoning on \DAL.

\subsection{Structure}
\label{subsec:struct}
In \Cref{sec:background} we introduce the basic notions needed to tackle the rest of the paper, including a brief introduction to the theory of Boolean algebras, \DAL, and default reasoning.
In \Cref{sec:theory} we introduce default deontic operators which allow us to perform default reasoning over \DAL.
This section contains our main contributions; therein, we prove some properties of our formalism, and show a completeness result.
In \Cref{sec:examples} we present a basic example with the aim of illustrating the notions introduced in earlier sections.
Finally, in \Cref{sec:final} we discuss some related work, and describe some of the further work that we plan to undertake.


%
%


\section{Preliminaries}
\label{sec:background}

In this section we introduce the basic notions needed to tackle the rest of the paper.

\subsection{Boolean Algebra in a Nutshell} 

We assume familiarity with the theory of Boolean algebras; and point out to \cite{Halmos:2009} for details.

\begin{definition}\label{definition:boolean:algebra}
	A \emph{Boolean algebra} is a structure
		$\mathbf{A} = \langle A, {+}, {\cdot}, {-}, 0, 1\rangle$
	where:
		$A$, also denoted $|\mathbf{A}|$, is non-empty set of elements called the \emph{carrier} set;
		$+$ and $\cdot$ are binary operators on $A$;
		$-$ is a unary operator on $A$;
		and $\{0,1\} \subseteq A$ are distinguished elements of $A$.
	We omit the axioms of Boolean algebras as they are well-known.
\end{definition}

\begin{definition}\label{definition:boolean:algebra:ideal}
	An \emph{ideal} of a Boolean algebra $\mathbf{A} = \langle A, {+}, {\cdot}, {-}, 0, 1\rangle$ is a subset $I \subseteq A$ s.t.:
	\begin{enumerate}[(i)]
		\setlength{\itemsep}{0pt} 
		\item if $x,y \in I$, ${x+y} \in I$
		and
		\item if $x \in I$, then ${x \cdot a} \in I$ for any $a \in A$.
	\end{enumerate}
	We define the set of all ideals of a $\mathbf{A}$ as $\Ideals[\mathbf{A}] = \set{I \subseteq A}{\text{$I$ is an ideal}}$.
	Moreover, for $B \subseteq A$, we define the \emph{ideal generated by $B$}, written $\GeneratedIdeal[B]$, as
		$\GeneratedIdeal[B] =
			{\bigcap\set{I \in \Ideals[\mathbf{A}]}{B \subseteq I}}$.
\end{definition}

\begin{definition}\label{definition:boolean:algebra:atom}
	Every Boolean algebra $\mathbf{A} = \langle A, {+}, {\cdot}, {-}, 0, 1\rangle$ is equipped with a relation $\sqsubseteq_{\mathbf{A}}$ defined as $a \sqsubseteq_{\mathbf{A}} b$ iff $a = {a \cdot b}$.
	The relation $\sqsubseteq_{\mathbf{A}}$ is a partial order.
	
\end{definition}

Henceforth, by a Boolean algebra, we mean a Boolean algebra which is not degenerate, i.e., which is such that $0 \neq 1$.
The concepts of Boolean algebras just introduced play a major role in what follows.


\subsection{Deontic Action Logic}\label{section:dal}

We cover the basis of a deontic action logic called {\DAL}.
{\DAL} is first introduced by Segerberg in \cite{Segerberg1982}.
We introduce its syntax and semantics following closely~\cite{Segerberg1982}, but including some remarks made in~\cite{Castro:2017,Trypuz15}.

\begin{definition}
The syntax of {\DAL} is comprised of \emph{actions} and \emph{formulas} defined on a countable set $\BasicActions = \set{a_i}{i\geq 0}$ of basic action symbols.
The set $\Actions$ of all actions of \DAL is given by the grammar:
\begin{nscenter}
$
\begin{array}{
r@{~::=~}l @{\quad~\quad} r@{~::=~}l}
\alpha &
		a_i
	\mid
		{\alpha \sqcup \alpha}
	\mid
		{\alpha \sqcap \alpha}
	\mid
		{\overline{\alpha}}
	\mid
		0
	\mid
		1.
\end{array}
$
\label{definition:actions}
\end{nscenter}
The set $\Formulas$ of all formulas of \textsf{DAL} is given by the grammar:
\begin{nscenter}
	$
\begin{array}{
r@{~::=~}l @{\quad~\quad} r@{~::=~}l}
\varphi &
		{\Not \varphi}
	\mid
		{\varphi \Or \varphi}
	\mid
		{\alpha=\beta}
	\mid
		\Permitted[\alpha]
	\mid
		\Forbidden[\alpha].
\end{array}
$
\label{definition:wffs}
\end{nscenter}
\end{definition}

We use $\alpha$, $\beta$, $\dots$, as variables for actions, and $A$, $B$, $\dots$, as variables for sets of actions.
Intuitively,
	any $a_i \in \BasicActions$ is a \emph{basic} action;
	$\alpha \sqcup \beta$ is the \emph{free-choice} between $\alpha$ and $\beta$;
	$\alpha \sqcap \beta$ is the \emph{parallel} execution of $\alpha$ and $\beta$;
	$\overline{\alpha}$ is the \emph{complement} of $\alpha$, i.e., any action other than $\alpha$;
	and
	$0$ and $1$ are the \emph{impossible} and the \emph{universal} actions, respectively.
We write $\alpha \equiv \beta$ for $(\alpha \sqcap \beta) \sqcup (\overline{\alpha} \sqcap \overline{\beta})$; and $\alpha \not \equiv \beta$ for $(\alpha \sqcap \overline{\beta}) \sqcup (\overline{\alpha} \sqcap \beta)$. 

%
We use $\varphi$, $\psi$, $\dots$, and $\Phi$, $\Psi$, $\dots$, as variables for formulas, and sets of formulas, respectively.
The Boolean logical connectives $\Not$ and $\Or$ have their usual intuitive understanding: $\Not$ stands for \emph{negation}; and $\Or$ stands for \emph{disjunction}.
We also consider derived Boolean logical connectives: $\True$ for \emph{verum}, $\False$ for \emph{falsum}, $\And$ for \emph{conjunction}, and $\Implies$ for \emph{material implication}.
These derived Boolean connectives are defined from $\Not$ and $\Or$ in the usual way.
The formula $\alpha = \beta$ intuitively means that $\alpha$ and $\beta$ are \emph{equal}.
The deontic operator $\Permitted$ stands for \emph{permitted} and it intuitively means that $\alpha$ is allowed to be executed.
In turn, the deontic operator $\Forbidden$ stands for \emph{forbidden} and it intuitively means that the execution of $\alpha$ forbidden.

The semantics of \DAL is given by \emph{deontic action algebras} (which contain Boolean algebras and ideals as an integral part) and \emph{valuation} functions.
These concepts are made precise below.

\begin{definition}\label{definition:deontic:action:algebra}
	A \emph{deontic action algebra} is a triple
	$\DeonticStructure = \langle
		\EventsAlgebra,
		\PermittedEvents,
		\ForbiddenEvents
	\rangle$ in which:
		$\mathbf{E} =
			\langle
				E, {+}, {\cdot}, {-}, 0, 1
			\rangle$
		is a Boolean algebra, and $\PermittedEvents$ and $F$ are ideals of $\EventsAlgebra$ s.t.\ ${\PermittedEvents \cap \ForbiddenEvents} = \{0\}$.
\end{definition}

Intuitively, given a deontic action algebra $\DeonticStructure = \langle \EventsAlgebra, \PermittedEvents, \ForbiddenEvents \rangle$, we can think of $\EventsAlgebra$ as an algebra of events, i.e., possible outcomes of actions, and of $\PermittedEvents$ and $\ForbiddenEvents$ as sets of permitted and forbidden events.
The condition ${\PermittedEvents \cap \ForbiddenEvents} = \{0\}$ for $\DeonticStructure$ can be understood as: \emph{only an impossible action is both permitted and forbidden}.
As pointed out in \cite{Trypuz15}, Boolean algebras are just one of the possible structures that can express a space of attitudes towards actions; yet, they are simple structures with a significant expressive power.

\begin{definition}\label{definition:valuation:function}
	Let $\DeonticStructure = \langle \EventsAlgebra, \PermittedEvents, \ForbiddenEvents \rangle$ be a deontic action algebra; a valuation function for $\DeonticStructure$ is a function $\InterpretationFunction : {\BasicActions \rightarrow |\EventsAlgebra|}$ which maps basic actions to events.
	$\InterpretationFunction$ extends uniquely to the set $\mathsf{Act}$ of actions as:
	\begin{nscenter}
		$
		\begin{array}{r@{\;=\;}l}
		\InterpretationFunction(0)
			&  0 \\
		\InterpretationFunction(1)
			&  1 \\
		\InterpretationFunction(\alpha \sqcup \beta)
			&
			{\InterpretationFunction(\alpha)
			+
			\InterpretationFunction(\beta)} \\
		\InterpretationFunction(\alpha \sqcap \beta)
			&
			{\InterpretationFunction(\alpha)
			\cdot
			\InterpretationFunction(\beta)} \\
		\InterpretationFunction(\overline{\alpha})
			&
			{-\InterpretationFunction(\alpha)}.
		\end{array}
		$
	\end{nscenter}
\end{definition}

%

\begin{definition}\label{definition:dal:consequence}
	Let $\DeonticStructure = \langle \EventsAlgebra, \PermittedEvents, \ForbiddenEvents \rangle$ be a deontic action algebra, and $\InterpretationFunction$ be a valuation for $\DeonticStructure$; the notion of a formula $\varphi$ being satisfied in $\DeonticStructure$ under $\InterpretationFunction$, notation ${\DeonticStructure, \InterpretationFunction} \vDash \varphi$, is inductively defined as:
	\begin{nscenter}
	$
	\begin{array}{lcl}
		{\DeonticStructure, \InterpretationFunction} \vDash {\Not\varphi}
			& \text{iff} &
			{\DeonticStructure, \InterpretationFunction} \not\vDash \varphi \\
		{\DeonticStructure, \InterpretationFunction} \vDash {\varphi \Or \psi}
			& \text{iff} &
			{\DeonticStructure, \InterpretationFunction} \vDash \varphi
			\mbox{ or }
			{\DeonticStructure, \InterpretationFunction} \vDash \psi\\
		{\DeonticStructure, \InterpretationFunction} \vDash {\alpha=\beta}
			& \text{iff} &
			\InterpretationFunction(\alpha) = \InterpretationFunction(\beta) \\
		{\DeonticStructure, \InterpretationFunction} \vDash {\Permitted}
			& \text{iff} &
			\InterpretationFunction(\alpha) \in \PermittedEvents \\
		{\DeonticStructure, \InterpretationFunction} \vDash {\Forbidden}
			& \text{iff} &
			\InterpretationFunction(\alpha) \in \ForbiddenEvents.
	\end{array}
	$
\end{nscenter}
We say that a formula $\varphi$ is an algebraic consequence of a set of formulas $\Phi$, notation $\Phi\consequence\varphi$, iff for any deontic action algebra $\DeonticStructure$ and valuation $\InterpretationFunction$ for $\DeonticStructure$, if ${\DeonticStructure, \InterpretationFunction} \vDash \psi$ for all $\psi \in \Phi$, then ${{\DeonticStructure, \InterpretationFunction} \vDash \varphi}$.
\end{definition}

Thus far we have treated \DAL from an algebraic perspective.
We now turn our attention to an axiom system and a Hilbert-style notion of provability for \DAL.

\begin{definition}\label{definition:dal:provability}
The standard list of axioms for \DAL consists of:
\begin{enumerate}
	\setlength{\itemsep}{0pt} 
	\item a complete (classical) set of axioms for $\Not$, and $\Or$ (together with $\True$, $\False$, and $\Implies$);
	\item a complete set of Boolean algebra axioms for $\sqcup$, $\sqcap$, $\overline{\phantom{\alpha}}$, $0$ and $1$; together with the axiom $\Not (0 = 1)$;
	\item a complete set of axioms for equality for $=$;
	\item the substitution axiom ${\alpha=\beta} \Implies (\varphi \Implies {\varphi_{\alpha}^{\beta}})$, where $\varphi_{\alpha}^{\beta}$ is the formula obtained from replacing some ocurrences of $\alpha$ with $\beta$;
	\item the deontic axioms
	\begin{enumerate}[D1.]
		\item $\Permitted[\alpha\sqcup\beta] \Iff (\Permitted[\alpha] \And \Permitted[\beta])$;
		\item $\Forbidden[\alpha\sqcup\beta] \Iff (\Forbidden[\alpha] \And \Forbidden[\beta])$;
		\item ${\alpha = 0} \Iff (\Permitted[\alpha] \And \Forbidden[\alpha])$.
	\end{enumerate}
\end{enumerate}
Let ${\Phi \cup \varphi}$ be a set of formulas;%
\footnote{We sometimes use the notation ${A \cup a}$ instead of $A \cup\{a\}$.}
consider a finite sequence $s = \psi_1, \dots, \psi_n$ of formulas s.t.\ $\psi_n = \varphi$ and for each $k \leq n$, $\psi_k$ is either:
\begin{enumerate}[(i)]
	\setlength{\itemsep}{0pt} 
	\item an axiom of \DAL;
	\item a member of $\Phi$;
	\item obtained from two earlier formulas in $s$ by \emph{modus ponens}, i.e., there are ${i,j} < k$ s.t.\ $\psi_j = {\psi_i \Implies \psi_k}$.
\end{enumerate}
We call any such a sequence $s$ a proof of $\varphi$ from $\Phi$.
We say that $\varphi$ is provable from $\Phi$, written $\Phi \entails \varphi$, if there is a proof of $\varphi$ from $\Phi$.
We define $\ClosureDAL{\Phi} = \set{\varphi \in \Formulas}{\Phi \entails \varphi}$.
We say that $\Phi$ is $\entails$-consistent iff $\ClosureDAL{\Phi} \subsetneq \Formulas$ (alternatively, iff $\Phi \not\entails \False$).
\end{definition}

\begin{proposition}\label{proposition:dal:soundness:completeness}
$\Phi \consequence \varphi$ iff $\Phi \entails \varphi$.
\end{proposition}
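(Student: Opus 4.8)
The plan is to prove soundness and completeness separately, as is standard for a Hilbert-style axiomatization.

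For soundness ($\Phi \entails \varphi$ implies $\Phi \consequence \varphi$), I would proceed by induction on the length of a proof $s = \psi_1, \dots, \psi_n$ of $\varphi$ from $\Phi$. Fix an arbitrary deontic action algebra $\DeonticStructure = \langle \EventsAlgebra, \PermittedEvents, \ForbiddenEvents \rangle$ and valuation $\InterpretationFunction$ satisfying every member of $\Phi$. It suffices to check that every axiom in the list of \Cref{definition:dal:provability} is valid in $\DeonticStructure$ under $\InterpretationFunction$, and that modus ponens preserves satisfaction (immediate from the clause for $\Implies$, which is derived from $\Not$ and $\Or$). The propositional, Boolean-algebra, and equality axioms are validated directly by \Cref{definition:valuation:function} and the fact that $\EventsAlgebra$ is a (non-degenerate) Boolean algebra: $\InterpretationFunction$ is a Boolean homomorphism, so it sends Boolean-algebra identities to identities, and $\Not(0=1)$ holds because $\EventsAlgebra$ is non-degenerate. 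The substitution axiom is handled by a side induction on $\varphi$. For the deontic axioms, D1 and D2 follow from the fact that $\PermittedEvents$ and $\ForbiddenEvents$ are ideals: $\InterpretationFunction(\alpha \sqcup \beta) = \InterpretationFunction(\alpha) + \InterpretationFunction(\beta) \in \PermittedEvents$ iff both $\InterpretationFunction(\alpha), \InterpretationFunction(\beta) \in \PermittedEvents$ (the forward direction uses that an ideal is downward closed under $\sqsubseteq_{\mathbf{E}}$, since $\InterpretationFunction(\alpha) = \InterpretationFunction(\alpha) \cdot (\InterpretationFunction(\alpha)+\InterpretationFunction(\beta))$, and the backward direction uses closure under $+$); similarly for $\ForbiddenEvents$. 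Axiom D3 follows from $\PermittedEvents \cap \ForbiddenEvents = \{0\}$ together with $0 \in \PermittedEvents$ and $0 \in \ForbiddenEvents$ (true for any ideal).

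For completeness ($\Phi \consequence \varphi$ implies $\Phi \entails \varphi$), I would argue the contrapositive: assuming $\Phi \not\entails \varphi$, I build a deontic action algebra and valuation satisfying $\Phi$ but not $\varphi$. The standard route is a Lindenbaum--Henkin construction. First extend $\Phi \cup \{\Not\varphi\}$ — which is $\entails$-consistent by hypothesis — to a maximal $\entails$-consistent set $\Gamma$. Then build the term/Lindenbaum algebra of actions: quotient $\Actions$ by the congruence $\alpha \sim_\Gamma \beta$ iff $(\alpha = \beta) \in \Gamma$, which the equality, Boolean-algebra, and substitution axioms guarantee is a congruence compatible with $\sqcup, \sqcap, \overline{\phantom{\alpha}}$; the result $\EventsAlgebra_\Gamma$ is a Boolean algebra, non-degenerate because $\Not(0=1) \in \Gamma$. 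Define $\PermittedEvents = \{[\alpha] : \Permitted[\alpha] \in \Gamma\}$ and $\ForbiddenEvents = \{[\alpha] : \Forbidden[\alpha] \in \Gamma\}$. Axioms D1 and D2 make these closed under $+$, and closure under multiplication by arbitrary elements follows since for any $\beta$ we have $\alpha \sqcap \beta = \alpha \sqcup (\alpha \sqcap \beta)$ provable, hence $\Permitted[\alpha] \Iff (\Permitted[\alpha \sqcap \beta] \And \dots)$ type reasoning via D1 — so each is an ideal; D3 forces $\PermittedEvents \cap \ForbiddenEvents = \{[0]\}$. Taking $\InterpretationFunction(a_i) = [a_i]$, a routine truth-lemma induction on formulas shows $\DeonticStructure_\Gamma, \InterpretationFunction \vDash \psi$ iff $\psi \in \Gamma$, so $\Phi$ is satisfied and $\varphi$ is not.

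I expect the main obstacle to be the ideal verification in the completeness direction — specifically showing that $\PermittedEvents$ and $\ForbiddenEvents$ as defined on the Lindenbaum algebra are genuinely closed under multiplication by arbitrary elements (clause (ii) of \Cref{definition:boolean:algebra:ideal}), since the deontic axioms only directly mention $\sqcup$. This requires observing that $x \sqsubseteq_{\mathbf{E}_\Gamma} y$ together with $\Permitted[y] \in \Gamma$ yields $\Permitted[x] \in \Gamma$: from $x = x \cdot y$ one gets $\alpha = \alpha \sqcap \beta$ provably when $[\alpha] = x, [\beta] = y$ with $[\alpha] \sqsubseteq [\beta]$, hence $\beta \sim_\Gamma \beta \sqcup \alpha$, so by D1 and the substitution axiom $\Permitted[\beta] \in \Gamma$ implies $\Permitted[\alpha] \in \Gamma$. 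A secondary subtlety is making sure the congruence used to form the quotient algebra genuinely respects all action operators and interacts correctly with the deontic clauses via the substitution axiom; this is bookkeeping but must be done carefully. Everything else — the soundness induction, the Lindenbaum extension, and the truth lemma — is routine. (Alternatively, one could cite the algebraic completeness result of \cite{Segerberg1982} directly if its statement matches ours, but I would present the self-contained argument above.)
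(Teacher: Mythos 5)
Your proposal is correct and, in substance, follows the same route as the paper: the paper gives no self-contained proof of this proposition (it defers to Segerberg~\cite{Segerberg1982}), but it identifies the crucial step as the construction of a Lindenbaum--Tarski algebra together with a pair of permitted/forbidden ideals serving as a canonical deontic action algebra --- which is exactly your completeness argument, and your soundness induction is the routine half. The one real difference is presentational: the construction the paper displays (in \Cref{sec:extensions:algebraic}) quotients by $\ClosureDAL{\Phi}$ itself and takes $\PermittedEvents_{\LTAlgebra_{\Phi}}$ and $\ForbiddenEvents_{\LTAlgebra_{\Phi}}$ to be the ideals \emph{generated} by the classes of provably permitted/forbidden actions (this is the form it needs to reuse for the default machinery), whereas you first maximalize $\Phi\cup\{\Not\varphi\}$ to a maximal consistent $\Gamma$ and then define the two ideals directly as the permitted/forbidden classes, verifying the ideal conditions from D1--D3. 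Your Lindenbaum--Henkin step is what makes the full truth lemma, and hence strong completeness for arbitrary $\Phi$ (e.g.\ a $\Phi$ containing a disjunction of deontic atoms), go through; the quotient over $\Phi$ alone, as displayed in the paper, would not satisfy such a $\Phi$ without that maximalization, which is presumably also present in Segerberg's full proof. One slip to fix: in the middle of your completeness paragraph you assert that $\alpha \sqcap \beta = \alpha \sqcup (\alpha \sqcap \beta)$ is provable --- the absorption law is $\alpha = \alpha \sqcup (\alpha \sqcap \beta)$ --- but your closing paragraph states the downward-closure argument correctly ($[\alpha]\sqsubseteq[\beta]$ and $\Permitted[\beta]\in\Gamma$ yield $\Permitted[\alpha]\in\Gamma$ via absorption, substitution, and D1), so this is a typo rather than a gap.
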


\Cref{proposition:dal:soundness:completeness} is proven by Segerberg in \cite{Segerberg1982}, and it establishes that the proof system from \Cref{definition:dal:provability} is \emph{strongly complete} with respect to the semantics based on deontic action algebras.
The crucial step carried out by Segerberg in the proof of \Cref{proposition:dal:soundness:completeness} is the construction of a Lindenbaum-Tarski algebra, and a pair of ideals in this algebra, which serves as a canonical deontic action algebra for establishing completeness.  
We present this construction in detail in \Cref{sec:extensions:algebraic} and use it to show how it can be extended to obtain the main result of this paper.


\subsection{Propositional Default Logic} \label{sec:default:logic}

We present a brief outline of Default Logic~\cite{Reiter:1980}.
Our aim is to recall some basic definitions to make our work self-contained.
In particular, we wish to bring to the fore a simple definition of default consequence following Makinson in \cite{Makinson:2005}.
To simplify our exposition, we restrict our definitions to Classical Propositional Logic (\CPL)~\cite{vanDalen:2004}.
This means that, in this section,
	by a formula we will mean a formula of \CPL.
We also use $\entails^{\CPL}$ to indicate the provability relation of \CPL and $\Closure{\Phi}$ for the set $\set{\varphi}{\Phi \entails^{\CPL} \varphi}$.


We take as our starting point the concept of a \emph{default} as an expression
$\idefaultrule$, 
where $\prerequisite$, $\justification$, and $\consequent$ are formulas called \emph{prerequisite}, \emph{justification}, and \emph{consequent}, respectively.
We use $\Delta$ as a variable for a set of defaults.
Intuitively, we can think of a default $\idefaultrule$ as a rule enabling us to pass from $\prerequisite$ to $\consequent$, provided that we can establish $\prerequisite$ and that the construction that we use for establishing $\prerequisite$ is individually consistent with $\Justifications\cup\justification$; where $\Justifications$ is the set of justifications of the defaults used in the aforementioned construction.
This intricate notion is formalized in \Cref{definition:extension:reiter}.

\begin{definition}[\cite{Reiter:1980}]\label{definition:extension:reiter}
	Let $\Phi$ be a set of formulas and let $\Delta$ be a set of defaults;
	also, let $\rmGamma^{\Phi}_{\Delta}$ be a function s.t.\
		for all sets of formulas $\Psi$,
			$\rmGamma^{\Phi}_\Delta(\Psi)$ is the smallest set of formulas which satifies:
	\begin{enumerate}[(i)]
		\setlength{\itemsep}{0pt} 
		\item
			$\Phi \subseteq \rmGamma^{\Phi}_{\Delta}(\Psi)$
		\item
			$\rmGamma^{\Phi}_{\Delta}(\Psi) = \Closure{(\rmGamma^{\Phi}_{\Delta}(\Psi))}$
		\item
			For all ${\idefaultrule \in \Delta}$, 
			if
				$\prerequisite \in \rmGamma^{\Phi}_{\Delta}(\Psi)$
				and
				${\Not \justification} \notin \Psi$,
			then,
				$\consequent \in \rmGamma^{\Phi}_{\Delta}(\Psi)$.
	\end{enumerate}
	We say that $\rmEpsilon$ is an \emph{extension} of $\Phi$ under $\Delta$ iff it is a fixed point of $\rmGamma_{\Delta}^{\Phi}$, i.e., iff $\rmEpsilon = \rmGamma_{\Delta}^{\Phi}(\rmEpsilon)$.
	We define the set of all extensions of $\Phi$ under $\Delta$ as $\Extensions_{\Delta}^{\Phi} = \set{\rmEpsilon}{{\rmEpsilon = \rmGamma_{\Delta}^{\Phi}(\rmEpsilon)}}$.
\end{definition}

\begin{example}\label{ex:nunique}
	Let $\Phi = \{ p, {\Not q \Or \Not r} \}$ and $\Delta = \{ {\idefaultrule[p][q][q]}, {\idefaultrule[p][r][r]} \}$;
	it follows that 
	$\rmEpsilon_1 = \Closure{\{p,{\Not q},r\}}$ and $\rmEpsilon_2 = \Closure{\{p,{\Not r},q\}}$ are extensions of $\rmGamma_{\Delta}^{\Phi}$.
\end{example}

Extensions as in \Cref{definition:extension:reiter} can be viewed as sets of formulas which are closed under the application of defaults.
This yields a notion of default consequence in the following sense.

\begin{definition}\label{definition:default:consequence:basic}
	Let ${\Phi \cup \varphi}$ be a set of formulas and $\Delta$ be a set of defaults; we say that $\varphi$ is a \emph{default consequence} of $\Phi$ under $\Delta$, written $\Phi \econsequence^{\CPL}_{\Delta} \varphi$, iff $\rmEpsilon \entails^{\CPL} \varphi$ for some $\rmEpsilon \in \Extensions^{\Phi}_{\Delta}$.
\end{definition}

The relation $\econsequence^{\CPL}_{\Delta}$ in \Cref{definition:default:consequence:basic} is called \emph{credulous} in the literature on Default Logic.
For this relation, it can be proven that the \emph{principle of monotonicity} does not necessarily hold, i.e., it is not necessarily the case that if $\Phi \econsequence^{\CPL}_{\Delta} \varphi$, then ${\Phi \cup \Psi} \econsequence^{\CPL}_{\Delta} \varphi$.
Whether or not monotonicity holds for $\econsequence^{\CPL}_{\Delta}$ depends on the particular set $\Delta$ of defaults.
We take failure of monotonicity for $\econsequence^{\CPL}_{\Delta}$ as a desirable property for its modelling capabilities of real world phenomena.

\begin{definition}\label{definition:default:consequence:interpretation}
	We say that $\econsequence^{\CPL}_{\Delta}$ \emph{interprets} $\entails^{\CPL}$ iff if $\Phi \entails^{\CPL} \varphi$, then $\Phi \econsequence^{\CPL}_{\Delta} \varphi$.
\end{definition}

\begin{property}\label{proposition:default:consequence:interpretation}
	$\econsequence^{\CPL}_{\Delta}$ interprets $\entails^{\CPL}$ iff for all sets of formulas $\Phi$, $\Extensions^{\Phi}_{\Delta} \neq \emptyset$.
\end{property}

\Cref{definition:default:consequence:interpretation} imposes a basic condition of $\econsequence^{\CPL}_{\Delta}$ which we also take as desirable.
We view default consequence as an enlargement of an underlying notion of provability.
%
%
%
It is worth noticing that `interpretability' depends on the existence of extensions.
Unfortunately, Reiter shows in \cite{Reiter:1980} that this may fail for some sets $\Delta$ of defaults.
Thus, `interpretability' is not guaranteed for arbitrary $\econsequence^{\CPL}_{\Delta}$.
This hinders our treatment of $\econsequence^{\CPL}_{\Delta}$.
At this point, we can go down two possible paths: (i) modify \Cref{definition:extension:reiter} to guarantee the existence of extensions; (ii) single out defaults for which extensions are guaranteed to exist.
As to (i), among the most popular modifications of \Cref{definition:extension:reiter} which guarantee the existence of extensions we have: \emph{justified} extensions, proposed by {\L}ukaszewicz in~\cite{Lukaszewicz:1988}; and \emph{constrained} extensions, proposed by Delgrande~et~al.~in~\cite{Delgrande:1994}.
As to (ii), we have the set of \emph{normal} defaults as a very large and natural set of defaults for which extensions as in \Cref{definition:extension:reiter} are guaranteed to exist~\cite{Reiter:1980}.
We choose to go down the second path and to restrict our attention to the case of normal defaults.
We make this restriction precise in \Cref{definition:default:consequence:normal}.

\begin{definition}\label{definition:default:consequence:normal}
	We say that a default $\idefaultrule$ is \emph{normal} iff $\justification = \consequent$.
	We use $\normal$ as notation for a normal default.
	A set $\Delta$ of defaults is normal iff all defaults in $\Delta$ are normal.
	We say that $\econsequence^{\CPL}_{\Delta}$ is \emph{normal} iff $\Delta$ is normal.
\end{definition}

\begin{property}\label{proposition:default:consequence:interpretation:normal}
	If $\econsequence^{\CPL}_{\Delta}$ is \emph{normal}, then $\econsequence^{\CPL}_{\Delta}$ interprets $\entails^{\CPL}$.
\end{property}

\begin{property}\label{proposition:default:consequence:normal:consistency}
	If $\econsequence^{\CPL}_{\Delta}$ is \emph{normal}, then $\Phi \econsequence^{\CPL}_{\Delta} \False$ iff $\Phi \entails^{\CPL} \False$.
\end{property}

Intuitively, we can understand \Cref{proposition:default:consequence:normal:consistency} as stating that defaults cannot be a source of inconsistency.
As a final remark, it is a known result that extensions, justified extensions, and constrained extensions, coincide for normal defaults~\cite{Froidevaux:1994,Cassano:2019}.
Normal defaults also arise often in many application areas.
Thus, restricting ourselves to normal defaults is not too confining.
Furthermore, normal default consequence as in \Cref{definition:default:consequence:normal} does not guarantee monotonicity, i.e., there are normal sets of defaults $\Delta$ for which $\Phi \econsequence^{\CPL}_{\Delta} \varphi$ and ${\Phi \cup \Psi} \not \econsequence^{\CPL}_{\Delta} \varphi$.

\section{Default Deontic Action Logic}
\label{sec:theory}

In this section we present the main results of our work.
We begin by introducing a definition of normal default consequence for \DAL.
This notion of default consequence enables us to perform default reasoning over deontic operators applied to actions.
For this notion, we develop a Hilbert-style proof calculus with a consistency check which enables us to capture default reasoning steps.
Moreover, we show a completeness result for our calculus extending the method proposed by Segerberg in \cite{Segerberg1982}.

\subsection{Normal Default Consequence on \DAL} 

We bring attention to the fact that the relation $\econsequence^{\CPL}_{\Delta}$ of normal default consequence presented in \Cref{definition:default:consequence:normal} is parametric on \CPL.
In other words, it is possible to define a notion of normal default consequence $\econsequence^{\DAL}_{\Delta}$ for \DAL simply by replacing $\entails^{\CPL}$ for $\entails$. 
It follows directly from this definition that $\econsequence^{\DAL}_{\Delta}$ is non-monotonic, i.e., monotonicity fails for $\Delta$ an arbitrary set of normal defaults in \DAL, and that $\econsequence^{\DAL}_{\Delta}$ interprets $\entails$.
To simplify  notation, from now on we write $\econsequence_{\Delta}$ instead of $\econsequence^{\DAL}_{\Delta}$.



\subsection{Proofs for Normal Default Consequence on \DAL} 

We assume that $\Delta$ is an arbitrary but fixed set of normal defaults defined on $\Formulas$ and that $\econsequence_{\Delta}$ is the normal default consequence relation associated to $\Delta$.
We present a Hilbert-style notion of proof for $\econsequence_{\Delta}$.

\begin{definition}\label{definition:default:provability}
Let $\Phi$ be a set of formulas and $\Delta$ be a set of defaults; also let $s = {\psi_1, \dots, \psi_n}$ be a finite sequence of formulas s.t.\ $\psi_n = \varphi$ and for each $k \leq n$, $\psi_k$ is either:
\begin{enumerate}[(i)]
	\setlength{\itemsep}{0pt} 
	\item an axiom of \DAL;
	\item a member of $\Phi$;
	\item obtained from two earlier formulas in $s$ by \emph{modus ponens}, i.e., there are ${i,j} < k$ s.t.\ $\psi_j = {\psi_i \Implies \psi_k}$;
	\item obtained from an earlier formula in $s$ by \emph{default detachment}, i.e., there is $j < k$ s.t.\ ${\normal[\psi_j][\psi_k]} \in \Delta$.
\end{enumerate}
If such a sequence $s$ exists, and $\set{\psi_i}{ 1 \leq i \leq n}$ is $\entails$-consistent, we say that $s$ is a default proof of $\varphi$ from $\Phi$ under $\Delta$.
Moreover, we say that $\varphi$ is $\Delta$-provable from $\Phi$, and write $\Phi \dentails_{\Delta} \varphi$, if there is a default proof of $\varphi$ from $\Phi$ under $\Delta$.
\end{definition}

The notion of a default proof in \Cref{definition:default:provability} can also be formulated inductively.
In this inductive formulation, each application of default detachment needs of a consistency check w.r.t.\ the formulas already in the proof; i.e., if default detachment is to be applied in a step $k$ in the proof, then, it is required for the set $\set{\psi_i}{ 1 \leq i \leq k}$ to be $\entails$-consistent.
This inductive formulation is equivalent to \Cref{definition:default:provability}.

\begin{theorem}\label{proposition:default:proof:extensions}
	For any $\entails$-consistent set $\Phi$ of formulas of $\Formulas$; $\Phi \dentails_{\Delta} \varphi$ iff $\Phi \econsequence_{\Delta} \varphi$.
\end{theorem}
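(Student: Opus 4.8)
The plan is to prove the two directions of the biconditional separately, using \Cref{definition:extension:reiter} and \Cref{definition:default:provability}, and keeping in mind that $\Delta$ is normal. Throughout, fix an $\entails$-consistent set $\Phi$.

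For the direction $\Phi \dentails_{\Delta} \varphi \Rightarrow \Phi \econsequence_{\Delta} \varphi$, suppose $s = \psi_1, \dots, \psi_n$ is a default proof of $\varphi$ from $\Phi$ under $\Delta$; in particular $S = \set{\psi_i}{1 \le i \le n}$ is $\entails$-consistent. The idea is to produce an extension $\rmEpsilon \in \Extensions^{\Phi}_{\Delta}$ with $\varphi \in \rmEpsilon$. First I would let $\Delta_s = \set{\normal[\psi_j][\psi_k]}{\text{default detachment is used for }\psi_k\text{ from }\psi_j\text{ in }s} \subseteq \Delta$ be the finite set of defaults actually fired in $s$, and note $\ClosureDAL{\Phi \cup S}$ is $\entails$-consistent (it equals $\ClosureDAL{S}$ since $\Phi \subseteq S$ and each $\psi_i$ is derivable from $\Phi$ together with earlier consequents, hence $S \subseteq \ClosureDAL{\Phi \cup \{\text{consequents of }\Delta_s\}}$). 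Because $\Delta$ is normal, I can invoke the classical semi-monotonicity/existence machinery for normal defaults (available via \Cref{proposition:default:consequence:interpretation:normal}, applied to the \DAL instance discussed at the start of \Cref{sec:theory}): there is an extension $\rmEpsilon$ of $\Phi$ under $\Delta$ that contains all the consequents $\psi_k$ fired along $s$ — concretely, one extends $\Delta_s$ to a maximal subset of $\Delta$ whose consequents can be consistently accumulated on top of $\Phi$, and the closure of that accumulation is the desired extension. Since $\varphi = \psi_n \in \ClosureDAL{\Phi \cup \{\text{consequents fired in }s\}} \subseteq \rmEpsilon$, we get $\rmEpsilon \entails \varphi$, hence $\Phi \econsequence_{\Delta} \varphi$.

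For the converse $\Phi \econsequence_{\Delta} \varphi \Rightarrow \Phi \dentails_{\Delta} \varphi$, take an extension $\rmEpsilon \in \Extensions^{\Phi}_{\Delta}$ with $\rmEpsilon \entails \varphi$. The standard fact I would use is the \emph{iterative characterization} of extensions for normal defaults: $\rmEpsilon = \bigcup_{m \ge 0} \ClosureDAL{E_m}$ where $E_0 = \Phi$ and $E_{m+1} = E_m \cup \set{\consequent}{\normal[\prerequisite][\consequent] \in \Delta,\ \prerequisite \in \ClosureDAL{E_m},\ \Not\consequent \notin \rmEpsilon}$. Because $\rmEpsilon \entails \varphi$ and provability is finitary, $\varphi$ is already provable from $\Phi$ together with finitely many consequents $\consequent_1, \dots, \consequent_t$ of defaults in $\Delta$, each of whose prerequisites is in turn provable from $\Phi$ and earlier consequents. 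I would then assemble a default proof $s$ by: first writing down a \DAL-proof of $\prerequisite_1$ from $\Phi$ (steps of types (i)–(iii)), then firing $\normal[\prerequisite_1][\consequent_1]$ (step type (iv)), then a \DAL-proof of $\prerequisite_2$ from $\Phi \cup \{\consequent_1\}$, firing $\normal[\prerequisite_2][\consequent_2]$, and so on, finishing with a \DAL-proof of $\varphi$. It remains to check the consistency side condition: the full set of formulas appearing in $s$ lies in $\ClosureDAL{\Phi \cup \{\consequent_1, \dots, \consequent_t\}} \subseteq \rmEpsilon$, and $\rmEpsilon$ is $\entails$-consistent — this is exactly where normality is essential, since by the \DAL analogue of \Cref{proposition:default:consequence:normal:consistency}, $\rmEpsilon$ cannot be inconsistent given that $\Phi$ is $\entails$-consistent. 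Hence $s$ is a legitimate default proof and $\Phi \dentails_{\Delta} \varphi$.

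The main obstacle I anticipate is the first direction: turning a single consistent default proof $s$ into an actual \emph{extension}. A default proof only fires whatever defaults the prover chose and only checks consistency of the formulas written down, whereas an extension is a fixed point that must be closed under \emph{all} applicable defaults (subject to the justification constraint). Bridging this gap requires the semi-monotonicity property of normal defaults — that any consistent "partial" application of normal defaults can be extended to a full extension — which is not proved in the excerpt but is a classical result about normal default theories; I would state it explicitly as a lemma (or cite \cite{Reiter:1980}) and verify it transfers verbatim to \DAL, since \Cref{definition:extension:reiter} and the proof system of \Cref{definition:dal:provability} have exactly the shape (monotone finitary consequence operator $\ClosureDAL{\cdot}$, with $\Not(0=1)$ guaranteeing a consistent base) that the classical argument needs.
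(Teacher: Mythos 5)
Your proposal is correct in outline and ends up performing the same two constructions as the paper, but it is organized around different imported lemmas. The paper handles both directions through a single device: a characterization of extensions by \emph{closed generating sequences} (adapted from Antoniou), i.e., sequences of defaults whose prerequisites are provable from $\Phi$ plus earlier consequents and whose consequents remain jointly $\entails$-consistent with $\Phi$. With that lemma, the direction from $\econsequence_{\Delta}$ to $\dentails_{\Delta}$ writes an extension as $\ClosureDAL{(\Phi\cup C)}$ for such a sequence, uses compactness to cut down to finitely many consequents, and splices proofs of the prerequisites in front of each fired consequent --- exactly the interleaved assembly you describe, except you obtain the needed stratification from Reiter's quasi-inductive characterization of extensions rather than from generating sequences; and the direction from $\dentails_{\Delta}$ to $\econsequence_{\Delta}$ reads the defaults fired in a default proof as a generating sequence and extends it to a closed one, which is precisely your appeal to semi-monotonicity of normal default theories made concrete. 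So where the paper adapts one characterization lemma to \DAL and uses it twice, you import two classical facts (semi-monotonicity and the iterative characterization) and correctly note that they must be re-verified over $\entails$; both routes are legitimate, the paper's being somewhat more self-contained. One inaccuracy you should repair: in the $\dentails_{\Delta}\Rightarrow\econsequence_{\Delta}$ direction you justify consistency of $\Phi$ together with the fired consequents by asserting $\Phi\subseteq S$, which is false in general, since a default proof need only contain the finitely many members of $\Phi$ it actually uses. The property you really need --- that $\Phi$ together with the consequents fired in the proof is $\entails$-consistent, not merely the set of formulas occurring in the proof --- is the same tacit strengthening of the side condition of \Cref{definition:default:provability} that the paper uses when it asserts that the defaults occurring in a default proof form a generating sequence; state it (or build it into the consistency check) explicitly instead of deriving it from $\Phi\subseteq S$.
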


\begin{proof}
	We use an alternative characterization of extensions in terms of \emph{closed generating sequences} (which adapts a definition of a closed process presented by Antoniou in \cite{Antoniou:1997}).
	By a $\Delta$-sequence we mean a (potentially infinite) sequence $s = {s_1, s_2, s_3, \dots}$ of defaults of $\Delta$.
	Let $s = {s_1, s_2, s_3, \dots}$ be a $\Delta$-sequence; the following notation is useful:
		$s|_{n} = {s_1, \dots, s_n}$,
		${s_i = {\normal[\prerequisite_i][\consequent_i]}}$,
		and
		$\Consequents_s = \set{\consequent_i}{s_i = {\normal[\prerequisite_i][\consequent_i]}}$.
	A \emph{generating sequence} is a $\Delta$-sequence $s = {s_1, s_2, s_3, \dots}$ s.t.\  for all indices $i$ of $s$, (a) ${\Phi \cup \Consequents_{s|_{(i-1)}}} \entails \prerequisite_i$ and (b) ${\Phi \cup \Consequents_{s|_i}}$ is $\entails$-consistent.
	A generating sequence is \emph{closed} iff it is not a strict initial segment of any other generating sequence.
	It can be proven, by adapting the proof found in \cite{Antoniou:1997}, that every closed generating sequence yields an extension, and that every extension has an associated closed generating sequence.
	
	Turning to the proof of \Cref{proposition:default:proof:extensions}, we first prove that if $\Phi \econsequence_{\Delta} \varphi$, then $\Phi \dentails_{\Delta} \varphi$.
	Let $E$ be an extension; then $E = \ClosureDAL{(\Phi \cup \Consequents_s)}$ where $s = {s_1, s_2, s_3, \dots}$ is a generating sequence.
	If ${\Phi \cup \Consequents_s} \entails \varphi$, from monotonicity and compactness for $\entails$, we obtain that for some index $n$ of $s$, ${\Phi \cup \Consequents_{s|_n}} \entails \varphi$.
	Let $p$ be a proof of $\varphi$ from ${\Phi \cup \Consequents_{s|_n}}$;
	we extend $p$ by:
		(1) inserting in front of the first occurrence of some $\consequent_m \in  (\Consequents_{s|_n} \cap p)$ a proof of $\prerequisite_m$ from $\Phi \cup \Consequents_{s|_(m-1)}$ (marking all successive occurrences of $\consequent_m$ as treated);
		(2) repeating (1) until all $\consequent_m \in  ({\Consequents_{s|_n} \cap p})$ have been treated.
	The result is a finite sequence of formulas which is, by construction, a default proof of $\varphi$ from $\Phi$ under $\Delta$.
	Thus, if ${\Phi \econsequence_{\Delta} \varphi}$, ${\Phi \dentails_{\Delta} \varphi}$.
	
	We now prove that if $\Phi \dentails_{\Delta} \varphi$, then $\Phi \econsequence_{\Delta} \varphi$.
	Let $p$ be a default proof of $\varphi$ from $\Phi$ under $\Delta$ and $s$ be the sequence of defaults of $\Delta$ used in $p$ in their order of appearance; by construction $s$ is a generating sequence.
	Extending $s$ to a generating sequence $s'$ that is closed we obtain that $\Phi \econsequence_{\Delta} \varphi$.
	Thus, if $\Phi \dentails_{\Delta} \varphi$, then $\Phi \econsequence_{\Delta} \varphi$.
\end{proof}

\subsection{Algebraic Extensions of Basic Deontic Defaults}
\label{sec:extensions:algebraic}

By a \emph{basic deontic default} we mean a normal default $\normal$ s.t.\ $\prerequisite = \Permitted[\alpha]$ and $\consequent = \Permitted[\beta]$; or $\prerequisite = \Forbidden[\alpha]$ and $\consequent = \Forbidden[\beta]$.
We write basic deontic defaults as $\Permitted[{\normal[\alpha][\beta]}]$ or $\Forbidden[{\normal[\alpha][\beta]}]$. 
Basic deontic defaults gain in interest when they are thought of as capturing defeasible conditional notions of permission and prohibition on actions.
We elaborate on the formal machinery behind basic deontic defaults by algebraic means using a Lindenbaum-Tarski construction.
In the context of deontic action logics, this construction is originally proposed in \cite{Segerberg1982} to show completeness of \DAL. 
The fundamental result of this section is the extension of Segerberg's result to account for default provability and consequence defined on basic deontic defaults.
For the rest of this section we assume that $\Phi$ is an arbitrary but fixed $\entails$-consistent set of formulas, and that $\Delta$ is an arbitrary but fixed set of basic deontic defaults.

We begin with a standard algebraic construction.
Define a binary relation $\equiv_{\Phi}$ on $\Actions$ as:
	\begin{nscenter}
	$
	{\alpha \equiv_{\Phi} \beta}
	\qquad \text{iff} \qquad
	{{\alpha = \beta} \in \ClosureDAL{\Phi}}.
	$
	\end{nscenter}
The relation $\equiv_{\Phi}$ is an equivalence relation.
Thus, for any action $\alpha \in \Actions$, we can define the equivalence class of $\alpha$ under $\equiv_{\Phi}$ as
	\begin{nscenter}
	$	[\alpha]_{\Phi}
			= \set{ \beta }{ \alpha \equiv_{\Phi} \beta}.
	$
	\end{nscenter}
We use $\Actions / \Phi$ to denote the collection of all equivalence classes of $\equiv_{\Phi}$, i.e., for the quotient of $\Actions$ under $\equiv_{\Phi}$.
Next, we define the Lindenbaum-Tarski algebra for $\Phi$ as the structure: 
	\begin{nscenter}
	$
		\LTAlgebra_{\Phi}=\langle {\Actions / \Phi},
		{+}, {\cdot}, {-}, 0, 1 \rangle
	$,
	\end{nscenter}
where the operations $+$, $\cdot$, $-$, and the distinguished elements $0$ and $1$ are defined as:
	\begin{align*}
		[\alpha]_\Phi + [\beta]_\Phi &= [\alpha \sqcup \beta ]_\Phi \\
		[\alpha]_\Phi \cdot [\beta]_\Phi &= [\alpha \sqcap \beta ]_\Phi \\
		{-[\alpha]_\Phi} &= [\overline{\alpha}]_\Phi \\
		0 &= [0]_\Phi \\
		1 &= [1]_\Phi.
	\end{align*}
It is trivial to prove that the operations on $\LTAlgebra_{\Phi}$ are well-defined.
%
The notions of a permitted ideal $\PermittedEvents_{\LTAlgebra_{\Phi}}$ and a forbidden ideal $\ForbiddenEvents_{\LTAlgebra_{\Phi}}$ for $\LTAlgebra_{\Phi}$ are defined as:
	\begin{nscenter}	
	$
	\begin{array}{l@{\;=\;}l}		
		\PermittedEvents_{\LTAlgebra_{\Phi}}
			&
			\textstyle \bigcap\set{ P \in \Ideals[\LTAlgebra_{\Phi}]}
				{
				\text{
				if
					$\Permitted[\alpha] \in \ClosureDAL{\Phi}$,
				then
					$[\alpha]_{\Phi} \in P$}
				}
		\\
		\ForbiddenEvents_{\LTAlgebra_{\Phi}}
			&
			\textstyle \bigcap\set{ F \in \Ideals[\LTAlgebra_{\Phi}]}
				{
				\text{
				if
					$\Forbidden[\alpha] \in \ClosureDAL{\Phi}$,
				then
					$[\alpha]_{\Phi} \in F$}
				}.
	\end{array}
	$
	\end{nscenter}
The main result proven in \cite{Segerberg1982} is that the triple $\langle \LTAlgebra_{\Phi}, \PermittedEvents_{\LTAlgebra_{\Phi}}, \ForbiddenEvents_{\LTAlgebra_{\Phi}} \rangle$ is a deontic action algebra.
This result, together with a function $\InterpretationFunction_{\LTAlgebra_{\Phi}}(a) = [a]_\Phi$ for all actions $a \in \Actions$, is then used to obtain a completeness result for provability $\entails$ and consequence $\consequence$ in \DAL; see \cite{Segerberg1982} for details.

We extend the construction of $\langle \LTAlgebra_{\Phi}, \PermittedEvents_{\LTAlgebra_{\Phi}}, \ForbiddenEvents_{\LTAlgebra_{\Phi}} \rangle$ from \cite{Segerberg1982}, in order to deal with basic deontic defaults in \Cref{definition:extension:algebraic}.
We begin with a preliminary definition.

\begin{definition}\label{definition:deontic:dual}
	Given $\PermittedEvents_{\LTAlgebra_{\Phi}}$ and $\ForbiddenEvents_{\LTAlgebra_{\Phi}}$; define:
	\begin{nscenter}	
		$
		\begin{array}{l@{\;=\;}l}
		\DeonticDual{\PermittedEvents_{\LTAlgebra_{\Phi}}} &
			\set
				{ [\beta]_\Phi }
				{\text{
					there is $[\alpha]_{\Phi} \in \LTAlgebra_{\Phi}$ s.t.\ ${{\Not \Permitted[\alpha]} \in \ClosureDAL{\Phi}}$ and ${[\beta]_\Phi \sqsubseteq_{\LTAlgebra_{\Phi}} [\alpha]_\Phi}$
				}
				} \setminus \PermittedEvents_{\LTAlgebra_{\Phi}}
		\\
		\DeonticDual{\ForbiddenEvents_{\LTAlgebra_{\Phi}}} &
			\set
			{ [\beta]_\Phi }
			{\text{
				there is $[\alpha]_{\Phi} \in \LTAlgebra_{\Phi}$ s.t.\ ${{\Not \Forbidden[\alpha]} \in \ClosureDAL{\Phi}}$ and ${[\beta]_\Phi \sqsubseteq_{\LTAlgebra_{\Phi}} [\alpha]_\Phi}$
			}
			} \setminus \ForbiddenEvents_{\LTAlgebra_{\Phi}}.
	\end{array}
	$
	\end{nscenter}
	We say that $\DeonticDual{\PermittedEvents_{\LTAlgebra_{\Phi}}}$ and $\DeonticDual{\ForbiddenEvents_{\LTAlgebra_{\Phi}}}$ are the \emph{deontic duals} of $\PermittedEvents_{\LTAlgebra_{\Phi}}$ and $\ForbiddenEvents_{\LTAlgebra_{\Phi}}$, resp.
	We define
	\begin{nscenter}
		$
		[\alpha]_\Phi \preccurlyeq \DeonticDual{\PermittedEvents_{\LTAlgebra_{\Phi}}}
		\text{ iff }
		{{\set{[\beta]_\Phi}{ [\beta]_\Phi \sqsubseteq_{\LTAlgebra_{\Phi}} [\alpha]_\Phi } \cap \DeonticDual{\PermittedEvents_{\LTAlgebra_{\Phi}}}} \neq \emptyset}.
		$
	\end{nscenter}
	The expression $[\alpha]_\Phi \preccurlyeq \DeonticDual{\ForbiddenEvents_{\LTAlgebra_{\Phi}}}$ is defined in a similar way.
\end{definition}

Let us note that $\DeonticDual{\PermittedEvents_{\LTAlgebra_{\Phi}}}$ is not necessarily a subset of $\ForbiddenEvents_{\LTAlgebra_{\Phi}}$, nor $\ForbiddenEvents_{\LTAlgebra_{\Phi}}$ is necessarily a subset of $\DeonticDual{\PermittedEvents_{\LTAlgebra_{\Phi}}}$; and similarly for $\DeonticDual{\ForbiddenEvents_{\LTAlgebra_{\Phi}}}$ and $\PermittedEvents_{\LTAlgebra_{\Phi}}$.
Deontic duals play a part in the check for consistency of basic deontic defaults.

\begin{definition}\label{definition:extension:algebraic}
	Let $\AlgebraicExtension_{\Phi}^{\Delta}: {{\Ideals[\LTAlgebra_{\Phi}]}^2 \rightarrow {\Ideals[\LTAlgebra_{\Phi}]}^2}$
		be a function s.t.\ if ${\AlgebraicExtension_{\Phi}^{\Delta}(\PermittedEvents, \ForbiddenEvents)} = (\PermittedEvents', \ForbiddenEvents')$, then $\PermittedEvents'$ and $\ForbiddenEvents'$ are the smallest ideals which satisfy:
	\begin{enumerate}[(i)]
		\itemsep 0cm
		\item $\PermittedEvents_{\LTAlgebra_{\Phi}} \subseteq \PermittedEvents'$ and $\ForbiddenEvents_{\LTAlgebra_{\Phi}} \subseteq \ForbiddenEvents'$;
		\item for all ${\Permitted[{\normal[\alpha][\beta]}]} \in \Delta$;
			if
				$[\alpha]_\Phi \in \PermittedEvents'$,
				${\GeneratedIdeal[\PermittedEvents' \cup {[\beta]_\Phi}] \cap F'} = [0]_\Phi$,
				and
				$[\beta]_\Phi \not\preccurlyeq \DeonticDual{\PermittedEvents_{\LTAlgebra_{\Phi}}}$,
			then,
				$[\beta]_\Phi \in \PermittedEvents'$; 
		\item for all ${\Forbidden[{\normal[\alpha][\beta]}]} \in \Delta$;
			if 
				$[\alpha]_\Phi \in \ForbiddenEvents'$,
				${\GeneratedIdeal[\ForbiddenEvents' \cup {[\beta]_\Phi}] \cap P'} = [0]_\Phi$,
				and
				$[\beta]_\Phi \not\preccurlyeq \DeonticDual{\ForbiddenEvents_{\LTAlgebra_{\Phi}}}$,
			then,
				$[\beta]_\Phi \in \ForbiddenEvents'$.
	\end{enumerate}
	$(P, F)$ is an \emph{algebraic extension} of $\Phi$ under $\Delta$ iff it is a fixed point of $\AlgebraicExtension_{\Phi}^{\Delta}$, i.e., iff $(P, F) = \AlgebraicExtension_{\Phi}^{\Delta}(P, F)$.
\end{definition}
	
In \Cref{definition:extension:algebraic} an algebraic extension is a pair of ideals in the Lindenbaum-Tarski $\LTAlgebra_{\Phi}$ enlarging the ideals $\PermittedEvents_{\LTAlgebra_{\Phi}}$ and $\ForbiddenEvents_{\LTAlgebra_{\Phi}}$ in a consistent way.
This construction is depicted in \Cref{figure:algebraic:extension}.

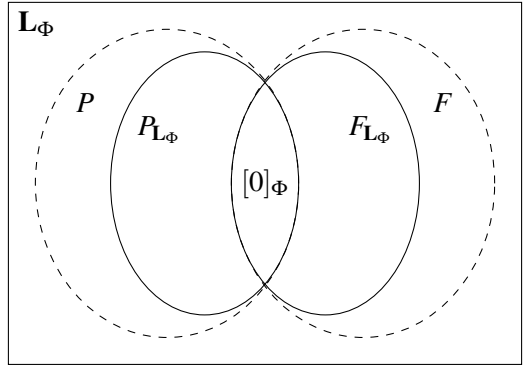
\begin{figure}[!h]
\begin{center}
\begin{tikzpicture}
	[node distance=0cm]
	\node
	[]
		(Zero)
		{$[0]_{\Phi}$};
		
	\node
	[
		draw,
		ellipse,
		minimum height=3.5cm,
		minimum width=2.5cm,
		left=of Zero.{east},
		anchor=east
	]
		(Permitted)
		{};
		\node[anchor=north west]  at (Permitted.135) {$\PermittedEvents_{\LTAlgebra_{\Phi}}$};
	\node
	[
		draw,
		dashed,
		ellipse,
		minimum height=4.1cm,
		minimum width=3.5cm,
		left=of Zero.{east},
		anchor=east
	]
		(EPermitted)
		{};
		\node[anchor=north west]  at (EPermitted.135) {$P$};
	\node
	[
		draw,
		ellipse,
		minimum height=3.5cm,
		minimum width=2.5cm,
		right=of Zero.{west},
		anchor=west
	]
		(Forbidden)
		{};
		\node[anchor=north east]  at (Forbidden.45) {$\ForbiddenEvents_{\LTAlgebra_{\Phi}}$};
	\node
	[
		draw,
		dashed,
		ellipse,
		minimum height=4.1cm,
		minimum width=3.5cm,
		right=of Zero.{west},
		anchor=west
	]
		(EForbidden)
		{};
		\node[anchor=north east]  at (EForbidden.45) {$F$};
	\node
	[
		draw,
		inner sep=10pt,
		fit={(Zero) (Permitted) (EPermitted) (Forbidden) (EForbidden)}
	]
		(LTAlgebra) {};
		\node[anchor=north west]  at (LTAlgebra.{north west}) {$\LTAlgebra_{\Phi}$};
\end{tikzpicture}
\end{center}
\caption{Algebraic Extension $(P, F)$ of $\Phi$ under $\Delta$}
\label{figure:algebraic:extension}
\end{figure}

Intuitively, on the algebraic side, ideals play the role that deductively closed sets of formulas play in  Reiter's notion of extension (c.f., \Cref{definition:extension:reiter}).
We bring attention to an important characteristic of the definition of an algebraic extension.
Algebraic extensions are ideals in a deontic action algebra.
This has the following implication.
In contrast to standard default reasoning where extensions are meta-level elements (deductively closed sets of formulas), algebraic extensions are semantic elements in the logic.

\begin{property}\label{proposition:algebraic:extension:existence}
	Algebraic extensions exist.
\end{property}

\begin{proof}
	The following notation is useful.
	If $(a, b)$ is a pair of elements, then $(a, b)_1 = a$ and $(a, b)_2 = b$.
	
	Define:
	\begin{nscenter}
	$
	\begin{array}{rl}
		e^0 &= (\PermittedEvents_{\LTAlgebra_{\Phi}}, \ForbiddenEvents_{\LTAlgebra_{\Phi}}) \\
		e^{(i+1)} &=
			\begin{cases}
				(\GeneratedIdeal[{e^i_1 \cup [\beta]_\Phi}], e^i_2)
					& \text{if there is $\Permitted[{\normal[\alpha][\beta]}] \in \Delta$ s.t.} \\
					& \text{\hspace{1em}
						${[\alpha]_\Phi \in e^i_1}$, 
						${\GeneratedIdeal[e^i_1 \cup {[\beta]_\Phi}] \cap e^i_2} = [0]_\Phi$, 
						$[\beta]_\Phi \not\preccurlyeq \DeonticDual{\PermittedEvents_{\LTAlgebra_{\Phi}}}$;}\\
				(e^i_1, \GeneratedIdeal[{e^i_2 \cup [\beta]_\Phi}])
					& \text{if there is $\Forbidden[{\normal[\alpha][\beta]}] \in \Delta$ s.t.} \\
					& \text{\hspace{1em}
						${[\alpha]_\Phi \in e^i_2}$,
						${\GeneratedIdeal[e^i_2 \cup {[\beta]_\Phi}] \cap e^i_1} = [0]_\Phi$,
						$[\beta]_\Phi \not\preccurlyeq \DeonticDual{\ForbiddenEvents_{\LTAlgebra_{\Phi}}}$;}\\
				e_i
					& \text{otherwise.}
			\end{cases}
	\end{array}
	$
\end{nscenter}
	
	Define:
	\begin{nscenter}
		$
		(P, F) =
			({\textstyle\bigcup\set{e^i_1}{i \geq 0}},
			{\textstyle\bigcup\set{e^i_2}{i \geq 0}}).
		$
	\end{nscenter}
	
	We claim that $(\PermittedEvents, \ForbiddenEvents)$ is an algebraic extension of $\Phi$ under $\Delta$.
	To prove this claim, first, we need to prove that $P$ and $F$ are ideals in $\LTAlgebra_{\Phi}$.
	This is direct.
	The proof continues by contradiction.
	Suppose that $(\PermittedEvents, \ForbiddenEvents)$ is not an algebraic extension of $\Phi$ under $\Delta$.
	Then, either (i), (ii), or (iii) from \Cref{definition:extension:algebraic} does not hold; or $(\PermittedEvents, \ForbiddenEvents)$ is not a fixed point of $\AlgebraicExtension_{\Phi}^{\Delta}$.
	The former cannot happen given the construction of the $e^i$'s. For the latter, we use two intermediate results:
	(a) the collection of ideals of a Boolean algebra form a complete lattice \cite{Halmos:2009}; and (b) $E^\delta_\Phi$ is monotone.
	This means that we can apply the Knaster-Tarski theorem (see e.g. \cite{davey:2002} for details).
	This yields a fixpoint:
		\begin{nscenter}
		$\textstyle \bigvee\limits_{\alpha < \omega_1} (E^\Delta_\Phi)^{\alpha}(\PermittedEvents_{\LTAlgebra_{\Phi}}, \ForbiddenEvents_{\LTAlgebra_{\Phi}}) = ({\textstyle\bigcup\set{e^i_1}{i \geq 0}},
					{\textstyle\bigcup\set{e^i_2}{i \geq 0}}).$
		\end{nscenter}
	From this, we obtain a contradiction
\end{proof}

\begin{property}\label{proposition:algebraic:extension:well-formed}
	If $(P, F)$ is an algebraic extension of $\Phi$ under $\Delta$, and $\Phi$ is {$\vdash$-consistent}, then, the triple $\langle \LTAlgebra_{\Phi}, P, F \rangle$ is a deontic action algebra.
\end{property}

\begin{proof}
	From \Cref{definition:extension:algebraic}, $P$ and $F$ are ideals.
	We only need to prove that ${P \cap F} = \{[0]_\Phi\}$.
	By contradiction, let
	${{P \cap F} = S} \neq \{[0]_\Phi\}$, consider the smallest ideals $P' \subseteq P$ and $F' \subseteq F$ s.t.\ ${P' \cap F'} = \{[0]_\Phi\}$.
	If we apply the function $\AlgebraicExtension_{\Phi}^{\Delta}$ to $(P',F')$ we obtain either a tuple $(P'', F')$ with $P'' \subsetneq P'$, or a tuple $(P', F'')$ with $F'' \subsetneq F'$.
	Note that, from our suppositions, we cannot have $\AlgebraicExtension_{\Phi}^{\Delta}(P',F') = (P',F')$.
	In the first case, we must have ${P'' \cap F'} = \{[0]_\Phi\}$, and similarly for the other case.
	Thus, $(P',F')$ is not the smallest subset of $(P,F)$ satisfying ${P\cap F} = \{[0]_\Phi\}$.
	This yields a contradiction.
\end{proof}

We define the notion of algebraic deontic default consequence in \Cref{definition:default:consequence:algebraic}.

\begin{definition}\label{definition:default:consequence:algebraic}
Let $\Phi$ be a $\entails$-consistent set of formulas and $\Delta$ be a set of basic deontic defaults; we say that a formula $\varphi$ of $\Formulas$ is an \emph{algebraic deontic default consequence} of $\Phi$ under $\Delta$, and write $\Phi \dconsequence_{\Delta} \varphi$, iff there exists an algebraic extension $(P, F)$ of $\Phi$ under $\Delta$ s.t.\ for all $(P', F') \supseteq (P,F)$, if ${P' \cap F'} = \{[0]_\Phi\}$, then
	$
	{\langle \LTAlgebra_{\Phi}, P', F' \rangle, \InterpretationFunction_{\Phi}} \consequence \varphi
	$
\end{definition}

We are now ready to show the main result of this work: the proof that default provability as defined in \Cref{definition:default:provability} is complete w.r.t.\ algebraic default consequence as defined in \Cref{definition:default:consequence:algebraic}.

\begin{theorem}\label{proposition:completeness:default}
	Let $\Phi$ be a $\entails$-consistent set of formulas and $\Delta$ be a set of basic deontic defaults; it follows that
	if $\Phi \dconsequence_{\Delta} \varphi$, then, $\Phi \dentails_{\Delta} \varphi$.
\end{theorem}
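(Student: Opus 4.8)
The plan is to reduce \Cref{proposition:completeness:default} to the Reiter-extension characterisation of $\dentails_{\Delta}$. Since $\Phi$ is $\entails$-consistent, \Cref{proposition:default:proof:extensions} tells us that $\Phi\dentails_{\Delta}\varphi$ holds as soon as some $\rmEpsilon\in\Extensions_{\Delta}^{\Phi}$ satisfies $\rmEpsilon\entails\varphi$; and, by the closed-generating-sequence characterisation used in its proof, it is in turn enough to exhibit a closed generating sequence $s$ (necessarily built from basic deontic defaults, since every default in $\Delta$ is one) with $\Phi\cup\Consequents_{s}\entails\varphi$. Unpacking the hypothesis, $\Phi\dconsequence_{\Delta}\varphi$ supplies an algebraic extension $(P,F)$ such that $\langle\LTAlgebra_{\Phi},P',F'\rangle,\InterpretationFunction_{\Phi}\vDash\varphi$ for every pair of ideals $(P',F')\supseteq(P,F)$ with $P'\cap F'=\{[0]_\Phi\}$; by \Cref{proposition:algebraic:extension:well-formed}, $(P,F)$ is itself such a pair.

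The core of the argument is a correspondence between algebraic extensions of $\Phi$ under $\Delta$ and closed generating sequences of basic deontic defaults. Mimicking the chain $e^{0}\subseteq e^{1}\subseteq\cdots$ of \Cref{proposition:algebraic:extension:existence}, I would first show that the fixed point $(P,F)$ is the union of such a chain, determined by a sequence $s=\delta_{1},\delta_{2},\dots$ of defaults of $\Delta$ applied greedily and then closed off. Read as a $\Delta$-sequence, $s$ is then a generating sequence in the sense of the proof of \Cref{proposition:default:proof:extensions}: its condition (a), $\Phi\cup\Consequents_{s|_{(i-1)}}\entails\prerequisite_{i}$, follows from $[\alpha_i]_\Phi\in e^{i-1}_{1}$ (writing $\prerequisite_i=\Permitted[\alpha_i]$, and symmetrically for the forbidden case), by unfolding the definition of $\GeneratedIdeal[\cdot]$ and applying the deontic axioms D1 and D2 together with the identity ``$[\gamma]_{\Phi}\in\PermittedEvents_{\LTAlgebra_{\Phi}}$ iff $\Permitted[\gamma]\in\ClosureDAL{\Phi}$'' and its $\Forbidden$-counterpart. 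Its condition (b), $\entails$-consistency of $\Phi\cup\Consequents_{s|_{i}}$, is exactly where the side conditions of \Cref{definition:extension:algebraic} do their work: the consistency check ${\GeneratedIdeal[e^{i-1}_{1}\cup[\beta]_\Phi]\cap e^{i-1}_{2}}=[0]_\Phi$ yields $e^{i}_{1}\cap e^{i}_{2}=\{[0]_\Phi\}$, so $\langle\LTAlgebra_{\Phi},e^{i}_{1},e^{i}_{2}\rangle$ is a deontic action algebra, while the deontic-dual checks $[\beta]_\Phi\not\preccurlyeq\DeonticDual{\PermittedEvents_{\LTAlgebra_{\Phi}}}$ (and the analogue for $\ForbiddenEvents_{\LTAlgebra_{\Phi}}$) guarantee, by a short calculation using distributivity and the downward-closure of ideals, that no class $[\alpha]_\Phi$ with ${\Not\Permitted[\alpha]}\in\ClosureDAL{\Phi}$ (resp.\ ${\Not\Forbidden[\alpha]}\in\ClosureDAL{\Phi}$) ever enters $e^{i}_{1}$ (resp.\ $e^{i}_{2}$). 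Consequently $\langle\LTAlgebra_{\Phi},e^{i}_{1},e^{i}_{2}\rangle,\InterpretationFunction_{\Phi}$ still satisfies $\Phi$ and, by construction, all of $\Consequents_{s|_{i}}$; hence by \Cref{proposition:dal:soundness:completeness}, $\Phi\cup\Consequents_{s|_{i}}$ is $\entails$-consistent.

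Extending $s$ to a closed generating sequence $s'$, the result invoked in the proof of \Cref{proposition:default:proof:extensions} makes $\rmEpsilon:=\ClosureDAL{\Phi\cup\Consequents_{s'}}$ a Reiter extension of $\Phi$ under $\Delta$, and $\rmEpsilon$ is $\entails$-consistent by compactness of $\entails$. Its Lindenbaum--Tarski algebra coincides with $\LTAlgebra_{\Phi}$ (basic deontic defaults, together with the consistency checks, introduce no new action equalities), and its permitted and forbidden ideals $(P',F')$ satisfy $(P',F')\supseteq(P,F)$, since $s'$ extends $s$, and $P'\cap F'=\{[0]_\Phi\}$, since $\rmEpsilon$ is consistent. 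The defining property of $\Phi\dconsequence_{\Delta}\varphi$ then yields $\langle\LTAlgebra_{\Phi},P',F'\rangle,\InterpretationFunction_{\Phi}\vDash\varphi$, whence $\varphi\in\ClosureDAL{\rmEpsilon}$ by \Cref{proposition:dal:soundness:completeness}, i.e.\ $\rmEpsilon\entails\varphi$. Therefore $\Phi\econsequence_{\Delta}\varphi$, and \Cref{proposition:default:proof:extensions} delivers $\Phi\dentails_{\Delta}\varphi$, as desired.

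The step I expect to be the main obstacle is the correspondence between algebraic and Reiter extensions: concretely, (i) showing that an \emph{arbitrary} algebraic extension --- not merely the one built explicitly in \Cref{proposition:algebraic:extension:existence} --- arises as the limit of a chain $(e^{i})$, which is the algebraic shadow of the semi-monotonicity of normal defaults and should follow by adapting the Antoniou-style argument already used for \Cref{proposition:default:proof:extensions}; and (ii) verifying that the consistency and deontic-dual side conditions of \Cref{definition:extension:algebraic} are exactly strong enough to certify $\entails$-consistency of the accumulated consequents, i.e.\ that enlarging the canonical ideals within those bounds never destroys satisfaction of $\Phi$. The remaining steps are routine manipulations of generated ideals and the deontic axioms D1--D3.
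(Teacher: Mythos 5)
There is a genuine gap, and it sits at the very last step of your argument. From the instantiation of \Cref{definition:default:consequence:algebraic} at the single pair $(P',F')$ built from the Reiter extension $\rmEpsilon$ you obtain $\langle\LTAlgebra_{\Phi},P',F'\rangle,\InterpretationFunction_{\Phi}\vDash\varphi$, and you then conclude $\rmEpsilon\entails\varphi$ ``by \Cref{proposition:dal:soundness:completeness}''. That proposition equates $\entails$ with $\consequence$, i.e.\ with truth in \emph{all} deontic action algebras and valuations satisfying the premises; satisfaction of $\varphi$ in one particular (canonical-style) algebra does not give provability. To make that step work you would need a truth lemma for $\langle\LTAlgebra_{\Phi},P',F'\rangle,\InterpretationFunction_{\Phi}$ with respect to $\ClosureDAL{\rmEpsilon}$, and such a lemma holds only for equalities and unnegated deontic atoms: for $\varphi=\Not\Permitted[\alpha]$, the model satisfies $\varphi$ as soon as $[\alpha]_\Phi\notin P'$, which is compatible with $\rmEpsilon\not\entails\Not\Permitted[\alpha]$, since a Reiter extension is in general not maximal and need not decide $\Permitted[\alpha]$. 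In other words, your route discards exactly the feature of \Cref{definition:default:consequence:algebraic} that carries the load, namely the universal quantification over all consistent enlargements $(P',F')\supseteq(P,F)$; using it at one enlargement only cannot be enough. A secondary, but also substantial, issue is the ``correspondence'' you flag yourself: the blocking conditions on the two sides do not match. On the algebraic side a default application is blocked by the deontic-dual test (existence of some $\alpha$ with $\Not\Permitted[\alpha]\in\ClosureDAL{\Phi}$ above, or below, $[\beta]_\Phi$), while on the Reiter side it is blocked by $\entails$-inconsistency of the accumulated consequents; since $\Permitted[\beta]\And\Not\Permitted[\alpha]$ is $\entails$-consistent even when $[\beta]_\Phi\sqsubseteq_{\LTAlgebra_{\Phi}}[\alpha]_\Phi$, the two notions of extension are not interchangeable without a careful (and currently missing) argument.

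The paper proves the theorem by a different architecture that avoids both problems: it argues contrapositively, assuming $\Phi\not\dentails_{\Delta}\varphi$, and proceeds by induction on $\varphi$ in negation normal form, showing that \emph{every} algebraic extension $(P,F)$ admits some consistent enlargement $(P',F')\supseteq(P,F)$ whose algebra falsifies $\varphi$ --- e.g.\ for $\varphi=\Not\Permitted[\alpha]$ one takes $P'=\GeneratedIdeal[{P\cup[\alpha]_\Phi}]$ and $F'=F$, checking $P'\cap F'=\{[0]_\Phi\}$ from the underivability of $\Forbidden[\alpha]$. This is precisely where the universal quantifier over enlargements is exploited, and it is the ingredient your proposal would need to recover before the reduction to Reiter extensions could be salvaged.
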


\begin{proof}
	We prove the contrapositive, i.e., if $\Phi \not \dentails_{\Delta} \varphi$, then $\Phi \not \dconsequence_{\Delta} \varphi$.
	Let $\Phi \not \dentails_{\Delta} \varphi$; the proof is concluded if $\Phi \not \dconsequence_{\Delta} \varphi$. This requires us to prove that for every algebraic extension $(P,F)$, there is $(P',F') \supseteq (P,F)$ s.t.\ ${P' \cap F'} = \{[0]_\Phi\}$ and $\langle \LTAlgebra_\Phi, P', F'\rangle \consequence \neg \varphi$.
	The proof continues by induction on $\varphi$.
	We assume that $\varphi$ is in \emph{negation normal form}.%
	\footnote{Any formula in \textsf{DAL} is equivalent to a formula in negation normal form.}
	Let $\varphi = \Permitted[\alpha]$; we must have $\langle \LTAlgebra_\Phi, P, F \rangle \consequence \Not \Permitted[\alpha]$ otherwise we would have either $\Phi \entails \Permitted[\alpha]$, or obtain $\Permitted[\alpha]$ by default detachment, but either case contradicts the assumption that $\Phi \not\dentails_\Delta \varphi$.
	For $\varphi = \Forbidden[\alpha]$ the proof is similar.
	Now, consider the case $\varphi = \neg \Permitted[\alpha]$;
	for any algebraic extension $(P, F)$ of $\Phi$ under $\Delta$, we define $P' = \GeneratedIdeal[{P \cup [\alpha]_\Phi}]$ and $F' = F$.
	Notice that if $\alpha=0$, we can trivially conclude ${P' \cap F} = \{[0]_\Phi\}$. On the other hand, if $\alpha\neq 0$, we cannot have $\Phi \dentails_\Delta \Forbidden[\alpha]$; otherwise we could obtain $\Phi \dentails_\Delta \neg \Permitted[\alpha]$, which contradicts the assumption that $\Phi$ is $\entails$-consistent.
	This also means that no default can add $\Forbidden[\alpha]$ to $\Phi$; thus $[\alpha]_\Phi \notin F$.
	This enables us to conclude that ${P' \cap F} = \{[0]_\Phi\}$.
	Then, $\langle \LTAlgebra_{\Phi}, P', F \rangle$ is a deontic action algebra s.t.\ $\langle \LTAlgebra_{\Phi}, P', F \rangle \consequence \Permitted[\alpha]$.
	The case $\varphi = \Not \Forbidden[\alpha]$ is similar.
	The result follows by a direct application of the inductive hypothesis to the cases $\varphi = \varphi' \vee \varphi''$ and $\varphi = \varphi' \wedge \varphi''$.
\end{proof}



\section{Illustrating Example}\label{sec:examples}
We illustrate the application of the formal framework introduced in earlier sections via a simple example.

We start by defining the vocabulary of basic action symbols as the set:
	\begin{nscenter}	
		$
		\BasicActions =
		\{
			\mathsf{d},
			\mathsf{o}
		\}.
		$
	\end{nscenter}
We use $\mathsf{d}$ to represent the action of driving on the road, and $\mathsf{o}$ to represent the action of overtaking (passing another car driving in the same direction).
We add as a basic principle on actions the following formula:
	\begin{nscenter}
		$
		(\mathsf{d} \equiv \mathsf{o}) = 0.
		$
	\end{nscenter}
Intuitively, this formula states that it is impossible to be simultaneously driving on the road and overtaking another car, or to be simultaneously not driving on the road and not overtaking another car.
This is all we know about actions.

Consider now a scenario in which we have the following regulations: it is permitted to drive on the road, which we formalize as $\Permitted[\mathsf{d}]$, and it is permitted by default to overtake a car whenever it is permitted to drive on the road, which we formalize as $\Permitted[{\normal[\mathsf{d}][\mathsf{o}]}]$.
We formally reason about this scenario as follows.
Let
	\begin{nscenter}
	$
	\begin{array}{ll}
		\Phi &=
		\{(\mathsf{d} \equiv \mathsf{o}) = 0, \Permitted[\mathsf{d}]\}
		\\
		\Delta &=
		\{ \Permitted[{\normal[\mathsf{d}][\mathsf{o}]}] \}.
	\end{array}
	$
	\end{nscenter}
The cube in \Cref{figure:algebraic:extension:example} depicts the Lindebaum-Tarski algebra $\LTAlgebra_{\Phi}$ of $\Phi$.
In this cube, nodes are labelled by equivalence classes under $\equiv_{\Phi}$.
The left face of the cube, highlighted in light gray, indicates the permitted ideal $\PermittedEvents_{\LTAlgebra_{\Phi}}$ of $\LTAlgebra_{\Phi}$, i.e., the set
\begin{nscenter}
	$\PermittedEvents_{\LTAlgebra_{\Phi}} = \GeneratedIdeal[{[\mathsf{d}]}_{\Phi}]$.
\end{nscenter}
The forbidden ideal of $\LTAlgebra_{\Phi}$ is the set $\ForbiddenEvents_{\LTAlgebra_{\Phi}} = \{[0]_\Phi\}$.
The deontic duals of $\PermittedEvents_{\LTAlgebra_{\Phi}}$ and $\ForbiddenEvents_{\LTAlgebra_{\Phi}}$ are $\DeonticDual{\PermittedEvents_{\LTAlgebra_{\Phi}}} = \DeonticDual{\ForbiddenEvents_{\LTAlgebra_{\Phi}}} = \emptyset$.
The pair
	\begin{nscenter}
		$(P, F) = (\LTAlgebra_{\Phi}, \ForbiddenEvents_{\LTAlgebra_{\Phi}})$
	\end{nscenter}
is the sole algebraic extension of $\Phi$ under $\Delta$.
From the above, it is possible to prove that:
	\begin{nscenter}
	$ \text{(i)~} \Phi
		\dconsequence_{\Delta}
			{\Permitted[\mathsf{d} \sqcup \mathsf{o}]}$.
	\end{nscenter}

\begin{figure}[!h]
\begin{center}

\def\aA{\mathsf{d}} 
\def\aB{\mathsf{o}} 

\def\Depth{5.5}
\def\Height{4}
\def\Width{4}

\begin{tikzpicture}
\coordinate (O) at (0,0,0);
\coordinate (A) at (0,\Width,0);
\coordinate (B) at (0,\Width,\Height);
\coordinate (C) at (0,0,\Height);
\coordinate (D) at (\Depth,0,0);
\coordinate (E) at (\Depth,\Width,0);
\coordinate (F) at (\Depth,\Width,\Height);
\coordinate (G) at (\Depth,0,\Height);

\draw[] (O) -- (C) -- (G) -- (D) -- cycle;
\draw[] (O) -- (A) -- (E) -- (D) -- cycle;
\draw[fill=gray!30,opacity=.8] (O) -- (A) -- (B) -- (C) -- cycle;
\draw[] (D) -- (E) -- (F) -- (G) -- cycle;
\draw[] (A) -- (B) -- (F) -- (E) -- cycle;


\node[anchor={north west}] at (O) {$[\aA \sqcap \overline{\aB}]_\Phi$};
\node[anchor={north west}] at (A) {$[\aA]_\Phi$};
\node[anchor={north west}] at (B) {$[\aA \sqcap \aB]_\Phi$};
\node[anchor={north west}] at (C) {$[0]_\Phi$};
\node[anchor={north west}] at (D) {$[\aA \not\equiv\aB]_\Phi$};
\node[anchor={north west}] at (E) {$[1]_\Phi$};
\node[anchor={north west}] at (F) {$[\aB]_\Phi$};
\node[anchor={north west}] at (G) {$[\overline{\aA} \sqcap \aB]_\Phi$};
\end{tikzpicture}
\end{center}
\caption{Algebraic Extension $(P,F)$ of $\Phi$ under $\Delta$}
\label{figure:algebraic:extension:example}
\end{figure}
	
Suppose that, to the scenario above, we add the fact that it is not permitted to overtake, formalized as $\Not \Permitted[\mathsf{o}]$, e.g., because the road is under construction.
Let $\Phi' = {\Phi \cup \{\Not \Permitted[\mathsf{o}]\}}$; we have that $\LTAlgebra_{\Phi'} = \LTAlgebra_{\Phi}$, i.e., the Lindenbaum-Tarski algebra of $\Phi'$ and $\Phi$ coincide.
Turning to permitted and forbidden ideals, we have ${\PermittedEvents_{\LTAlgebra_{\Phi'}} =\PermittedEvents_{\LTAlgebra_{\Phi}}}$ and ${\ForbiddenEvents_{\LTAlgebra_{\Phi'}} = \ForbiddenEvents_{\LTAlgebra_{\Phi}}}$.
As to deontic duals, we have $\DeonticDual{\PermittedEvents_{\LTAlgebra_{\Phi}}} = \{[\mathsf{o}]_\Phi, [\mathsf{\overline{r}} \sqcap \mathsf{o}]_\Phi\}$ and $\DeonticDual{\ForbiddenEvents_{\LTAlgebra_{\Phi}}} = \emptyset$.
The pair
	\begin{nscenter}
		$(P, F) = (\PermittedEvents_{\LTAlgebra_{\Phi'}},\ForbiddenEvents_{\LTAlgebra_{\Phi'}})$
	\end{nscenter}
is the sole algebraic extension of $\Phi'$ under $\Delta$.
From this it is possible to prove that:
	\begin{nscenter}
	$\text{(ii)~} \Phi \cup {\Not \Permitted[\mathsf{o}]} 
			\not\dconsequence_{\Delta}
				{\Permitted[\mathsf{d} \sqcup \mathsf{o}]}$.
	\end{nscenter}

When taken together, (i) and (ii) illustrate some of the ``dynamic'' behaviour of our framework for default reasoning over deontic action operations.
If all we know is that driving on the road is permitted, and we have no information on whether overtaking is not permitted, we can conclude by default that the free choice of driving on the road or overtaking is permitted.
However, this conclusion is withdrawn as soon as we learn that overtaking is not permitted.

On the syntactical side, let $\rmEpsilon = \Phi \cup \{\Permitted[\mathsf{o}]\}$, it is possible to prove that the $\ClosureDAL{\rmEpsilon}$ is the sole extension of $\Phi$ under $\Delta$. From this fact, we can conclude that $\Phi \econsequence_\Delta \Permitted[\mathsf{d} \sqcup \mathsf{o}]$ using the sequence $s$ below as a $\entails$-proof witness.
	\begin{nscenter}
		\begin{tabular}{r l r}
			1. & $\Permitted[\mathsf{o}]$ & from $\rmEpsilon$; \\
			2. & $\Permitted[\mathsf{d}]$ & from $\rmEpsilon$; \\
			3. & $\Permitted[\mathsf{d}] \And \Permitted[\mathsf{o}]$ & from 1.\ and 2.\ in \CPL; \\
			4. & $(\Permitted[\mathsf{d}] \And \Permitted[\mathsf{o}]) \Implies \Permitted[\mathsf{d} \sqcup \mathsf{o}]$ & from D1.\ in \DAL; \\
			5. & $\Permitted[\mathsf{d} \sqcup \mathsf{o}]$ & \emph{modus ponens} on 3.\ and 4. \\
		\end{tabular}
	\end{nscenter}
The sequence $s$ can be transformed into a default proof of $\Permitted[\mathsf{d} \sqcup \mathsf{o}]$ from $\Phi$ under $\Delta$ by appending $\Permitted[\mathsf{d}]$, i.e., a proof of the prerequisite of the default in $\Delta$, at the beginning of $s$.
The resulting default proof, shown below, illustrates the construction used in \Cref{proposition:default:proof:extensions}.
	\begin{nscenter}
		\begin{tabular}{r l r}
			1. & $\Permitted[\mathsf{d}]$
				& from $\Phi$ \\
			2. & $\Permitted[\mathsf{o}]$
				& \emph{default detachment} on 1.\ and $\Permitted[{\normal[\mathsf{d}][\mathsf{o}]}]$ \\
			3. & $\Permitted[\mathsf{d}]$
				& from $\Phi$ \\
			4. & $\Permitted[\mathsf{d}] \And \Permitted[\mathsf{o}]$
				& from 2.\ and 3.\ in \CPL \\
			5. & $(\Permitted[\mathsf{d}] \And \Permitted[\mathsf{o}]) \Implies \Permitted[\mathsf{d} \sqcup \mathsf{o}]$
				& from D1.\ in \DAL \\
			6. & $\Permitted[\mathsf{d} \sqcup \mathsf{o}]$
				& \emph{modus ponens} on 4.\ and 4. \\
		\end{tabular}
	\end{nscenter}


\section{Final remarks}
\label{sec:final}

We introduced a novel presentation of a Default Logic on Segerberg's Deontic Action Logic \DAL \cite{Segerberg1982}.
This formalism enables us to reason about scenarios involving norms defined on actions, and eventual changes in such norms.
In addition to a standard construction of building a default logic over an underlying logic, as done, e.g., in \cite{Cassano:2019}, our approach uses the semantic elements of \DAL, i.e., deontic Boolean algebras, in a natural way to capture the meaning of defaults.
The first benefit of our approach is its simplicity.
Default logic heavily makes use of fix-point constructions.
Viewing default logic from an algebraic perspective allow us to rely on well-known results, i.e., the Knaster-Tarski theorem, to prove the existence of fix-points.
The second, and most important, benefit of our approach is that it allows us to extend the ideas presented in \cite{Segerberg1982} to obtain a completeness theorem for normal default consequence on basic deontic defaults.

An important reference in the area of default reasoning and deontic logic is \cite{Nute:1997}. Therein, several authors present diverse approaches to defeasible reasoning over normative systems.
Interestingly, some papers of these approaches investigate the combination of some of Reiter's notions with deontic logic.
For instance, in \cite{Horty:1997}, Horty uses a non-normal modal logic, based on Chellas' ideas \cite{Chellas:1980}, and combines this formalism with Reiter's default logic to adapt the notion of obligation to non-monotonic reasoning.
Another example is \cite{vanderTorre:1980}. In this work, the authors use Reiter's approach to discuss three kinds of defeasibility: \emph{factual defeasibility}, \emph{overriding defeasibility}, and \emph{weak-overriden defeasibility}.
A defeasible deontic logic based also on Reiter's notion of an extension is presented in \cite{Ryu:1980}. In this case, the formalism tackles a notion of preference between norms to deal conflictive rules caused by many different sources.
Another non-monotonic logic to formalize and reason about prima-facie obligations is presented in \cite{Asher:1980}.
Finally, in \cite{Royakkers:1980} there is a proposal to distinguish between default rules and norms, also using Reiter's notions.
It is important to remark that all these works are focused on Standard Deontic Logic, i.e., they are based on an ought-to-be deontic logic (i.e., deontic operators applied to propositions).
In this paper, we incorporate default reasoning to an ought-to-do deontic logic.
As discussed in \cite{Castaneda:1972},  ought-to-do deontic logics are orthogonal to ought-to-be formalisms; in the former, the prescriptions are applied to actions; while in the latter, norms are applied to ``state of affairs''.
The interested reader is referred to the aforementioned work for an in-depth discussion about the implications of this difference.
We are not aware of any work that provides default reasoning over an ought-to-do deontic logic.
It must also be noticed that our approach has a semantic flavour, in contrast to the works mentioned above which make use of Reiter's notion of extensions, which is syntactical in nature.

A more recent reference concerning non-monotonic reasoning on logics handling actions is \cite{castilho:2002}.
In this work, the authors present what they call a \emph{Logic of Actions and Plans with Dependences} (\LAPD).
In \LAPD, actions influence the truth of certain propositions and dependences capture certain frame conditions on the execution of actions.
Intuitively, this logic may be seen as capturing the effects of executing an action in certain contexts; similarly to our deontic defaults.
However, from a purely technical perspective, \LAPD is more related to Meyer's approach to deontic operators applied to actions rather than to Segerberg's approach.
In any case, we would like to establish more concrete connections between logics such as \LAPD and our logic.
In particular, it would be interesting to study the difference in expressivity and computational complexity.
This is part of the further work that we plan to undertake.

There are several other interesting directions to explore in the future.
From the theoretical side, it would be interesting to obtain a completeness result for defaults whose prerequisites, justifications, and consequents can be any formula of \DAL.
We conjecture that for normal defaults of the form $\normal[{\Permitted[\alpha]}][{\Forbidden[\beta]}]$ or $\normal[{\Forbidden[\alpha]}][{\Permitted[\beta]}]$, our results can be easily extended.
The additional complication originates from the use of equalities $=$ and negation $\Not$ in the consequent of an arbitrary default; e.g., $=$ may change the Lindenbaum-Tarski algebra used in the definition of an algebraic extension.
Also, it would be interesting to extend the basic logic described here with additional deontic operators.
Immediately coming to mind are those of obligation, weak permission or prohibition, and conditional prescriptions.
These new deontic operators can also be dealt algebraically by means of generalizations of Boolean algebras.
Two simple examples are Boolean algebras with operators, used to algebraize modalities \cite{Blackburn:2007}, and residuated Boolean algebras \cite{Jipsen:1992}, which provide residual operators useful for reasoning about action composition.

Turning to practical considerations, it would be interesting provide some tool support for reasoning in our setting.
The algebraic semantics of \DAL, as well as the extensions mentioned above, can be modelled using First-Order theories and are therefore amenable to the use of SMT solvers \cite{Barret:2009}.
We would like to have at hand implementations for reasoning about conditions on ideals and the existence of algebraic extensions.
The symbolic representation of Boolean formulas such as binary decision diagrams \cite{Bryant:2018} can be used to provide efficient ways of encoding deontic formulas and defaults.
However, these are just some preliminary thoughts that require further exploration.

\paragraph*{Ackowledgements.} This work was partially supported by ANPCyT-PICTs-2017-1130 and 2016-0215, MinCyT C\'ordoba, SeCyT-UNC, the European Union’s Horizon 2020 research and innovation programme
under the Marie Skodowska-Curie grant agreement No. 690974 for the project MIREL:
MIning and REasoning with Legal texts, and the Laboratoire International Associ\'e INFINIS.

\bibliographystyle{eptcs}
\bibliography{bibliography}

\begin{thebibliography}{10}
\providecommand{\bibitemdeclare}[2]{}
\providecommand{\surnamestart}{}
\providecommand{\surnameend}{}
\providecommand{\urlprefix}{Available at }
\providecommand{\url}[1]{\texttt{#1}}
\providecommand{\href}[2]{\texttt{#2}}
\providecommand{\urlalt}[2]{\href{#1}{#2}}
\providecommand{\doi}[1]{doi:\urlalt{http://dx.doi.org/#1}{#1}}
\providecommand{\bibinfo}[2]{#2}

\bibitemdeclare{book}{Antoniou:1997}
\bibitem{Antoniou:1997}
\bibinfo{author}{G.~\surnamestart Antoniou\surnameend} (\bibinfo{year}{1997}):
  \emph{\bibinfo{title}{Nonmonotonic Reasoning}}.
\newblock \bibinfo{series}{Artificial Intelligence}, \bibinfo{publisher}{The
  MIT Press}.

\bibitemdeclare{incollection}{Antoniou:2007}
\bibitem{Antoniou:2007}
\bibinfo{author}{G.~\surnamestart Antoniou\surnameend} \&
  \bibinfo{author}{K.~\surnamestart Wang\surnameend} (\bibinfo{year}{2007}):
  \emph{\bibinfo{title}{Default Logic}}.
\newblock In \bibinfo{editor}{D.~\surnamestart Gabbay\surnameend} \&
  \bibinfo{editor}{J.~\surnamestart Woods\surnameend}, editors: {\sl
  \bibinfo{booktitle}{The Many Valued and Nonmonotonic Turn in Logic}}, {\sl
  \bibinfo{series}{Handbook of the History of Logic}}~\bibinfo{volume}{8},
  \bibinfo{publisher}{North-Holland}, pp. \bibinfo{pages}{517--555},
  \doi{10.1016/S1874-5857(07)80011-2}.

\bibitemdeclare{incollection}{Aqvist:2002}
\bibitem{Aqvist:2002}
\bibinfo{author}{L.~\surnamestart {\AA}qvist\surnameend}
  (\bibinfo{year}{2007}): \emph{\bibinfo{title}{Deontic Logic}}.
\newblock In \bibinfo{editor}{D.~\surnamestart Gabbay\surnameend} \&
  \bibinfo{editor}{F.~\surnamestart Guenthner\surnameend}, editors: {\sl
  \bibinfo{booktitle}{Handbook of Philosophical Logic}}, \bibinfo{volume}{8},
  \bibinfo{publisher}{Springer}, pp. \bibinfo{pages}{147--264},
  \doi{10.1007/978-94-010-0387-2\_3}.

\bibitemdeclare{incollection}{Asher:1980}
\bibitem{Asher:1980}
\bibinfo{author}{N.~\surnamestart Asher\surnameend} \&
  \bibinfo{author}{D.~\surnamestart Bonevac\surnameend} (\bibinfo{year}{1997}):
  \emph{\bibinfo{title}{Common Sense Obligation}}.
\newblock In \bibinfo{editor}{Nute}  \cite{Nute:1997}, pp.
  \bibinfo{pages}{159--204}, \doi{10.1007/978-94-015-8851-5\_8}.

\bibitemdeclare{incollection}{Barret:2009}
\bibitem{Barret:2009}
\bibinfo{author}{C.~\surnamestart Barrett\surnameend},
  \bibinfo{author}{R.~\surnamestart Sebastiani\surnameend},
  \bibinfo{author}{S.~\surnamestart Seshia\surnameend} \&
  \bibinfo{author}{C.~\surnamestart Tinelli\surnameend} (\bibinfo{year}{2009}):
  \emph{\bibinfo{title}{Satisfiability Modulo Theories}}.
\newblock In: {\sl \bibinfo{booktitle}{Handbook of Satisfiability}}, pp.
  \bibinfo{pages}{825--885}, \doi{10.3233/978-1-58603-929-5-825}.

\bibitemdeclare{book}{Blackburn:2007}
\bibitem{Blackburn:2007}
\bibinfo{editor}{P.~\surnamestart Blackburn\surnameend},
  \bibinfo{editor}{J.~\surnamestart van Benthem\surnameend} \&
  \bibinfo{editor}{F.~\surnamestart Wolter\surnameend}, editors
  (\bibinfo{year}{2007}): \emph{\bibinfo{title}{Handbook of Modal Logic}}.
\newblock \bibinfo{publisher}{Elsevier}.

\bibitemdeclare{incollection}{Bryant:2018}
\bibitem{Bryant:2018}
\bibinfo{author}{R.~\surnamestart Bryant\surnameend} (\bibinfo{year}{2018}):
  \emph{\bibinfo{title}{Binary Decision Diagrams}}.
\newblock In \bibinfo{editor}{E.~\surnamestart Clarke\surnameend},
  \bibinfo{editor}{T.~\surnamestart Henzinger\surnameend},
  \bibinfo{editor}{H.~\surnamestart Veith\surnameend} \&
  \bibinfo{editor}{R.~\surnamestart Bloem\surnameend}, editors: {\sl
  \bibinfo{booktitle}{Handbook of Model Checking}},
  \bibinfo{publisher}{Springer}, pp. \bibinfo{pages}{191--217},
  \doi{10.1007/978-3-319-10575-8\_7}.

\bibitemdeclare{inproceedings}{Cassano:2019}
\bibitem{Cassano:2019}
\bibinfo{author}{V.~\surnamestart Cassano\surnameend},
  \bibinfo{author}{R.~\surnamestart Fervari\surnameend},
  \bibinfo{author}{C.~\surnamestart Areces\surnameend} \&
  \bibinfo{author}{P.~\surnamestart Castro\surnameend} (\bibinfo{year}{2019}):
  \emph{\bibinfo{title}{Interpolation and Beth Definability in Default
  Logics}}.
\newblock In \bibinfo{editor}{F.~\surnamestart Calimeri\surnameend},
  \bibinfo{editor}{N.~\surnamestart Leone\surnameend} \&
  \bibinfo{editor}{M.~\surnamestart Manna\surnameend}, editors: {\sl
  \bibinfo{booktitle}{16th European Conference on Logics in Artificial
  Intelligence ({JELIA} 2019)}}, {\sl \bibinfo{series}{LNCS}}
  \bibinfo{volume}{11468}, \bibinfo{publisher}{Springer}, pp.
  \bibinfo{pages}{675--691}, \doi{10.1007/978-3-030-19570-0\_44}.

\bibitemdeclare{article}{Castaneda:1972}
\bibitem{Castaneda:1972}
\bibinfo{author}{H.~\surnamestart Casta\~neda\surnameend}
  (\bibinfo{year}{1970}): \emph{\bibinfo{title}{On the Semantics of the
  Ought-to-Do}}.
\newblock {\sl \bibinfo{journal}{Synthese}}
  \bibinfo{volume}{21}(\bibinfo{number}{3/4}), pp. \bibinfo{pages}{449--468},
  \doi{10.1007/978-94-010-2557-7\_21}.

\bibitemdeclare{inproceedings}{castilho:2002}
\bibitem{castilho:2002}
\bibinfo{author}{M.~\surnamestart Castilho\surnameend},
  \bibinfo{author}{A.~\surnamestart Herzig\surnameend} \&
  \bibinfo{author}{I.~\surnamestart Varzinczak\surnameend}
  (\bibinfo{year}{2002}): \emph{\bibinfo{title}{It depends on the context! {A}
  decidable logic of actions and plans based on a ternary dependence
  relation}}.
\newblock In \bibinfo{editor}{S.~\surnamestart Benferhat\surnameend} \&
  \bibinfo{editor}{E.~\surnamestart Giunchiglia\surnameend}, editors: {\sl
  \bibinfo{booktitle}{9th International Workshop on Non-Monotonic Reasoning
  {(NMR} 2002)}}, pp. \bibinfo{pages}{343--348}.

\bibitemdeclare{article}{Castro:2017}
\bibitem{Castro:2017}
\bibinfo{author}{P.~\surnamestart Castro\surnameend} (\bibinfo{year}{2017}):
  \emph{\bibinfo{title}{Tableau Systems for Deontic Action Logics Based on
  Finite Boolean Algebras, and Their Complexity}}.
\newblock {\sl \bibinfo{journal}{Studia Logica}}
  \bibinfo{volume}{105}(\bibinfo{number}{2}), pp. \bibinfo{pages}{229--251},
  \doi{10.1007/s11225-016-9688-6}.

\bibitemdeclare{book}{Chellas:1980}
\bibitem{Chellas:1980}
\bibinfo{author}{B.~\surnamestart Chellas\surnameend} (\bibinfo{year}{1980}):
  \emph{\bibinfo{title}{Modal Logic (An Introduction)}}.
\newblock \bibinfo{publisher}{Cambridge U Press},
  \doi{10.1017/CBO9780511621192}.

\bibitemdeclare{book}{vanDalen:2004}
\bibitem{vanDalen:2004}
\bibinfo{author}{D.~\surnamestart van Dalen\surnameend} (\bibinfo{year}{2004}):
  \emph{\bibinfo{title}{Logic and structure}}, \bibinfo{edition}{5th} edition.
\newblock \bibinfo{publisher}{Springer}, \doi{10.1007/978-1-4471-4558-5}.

\bibitemdeclare{book}{davey:2002}
\bibitem{davey:2002}
\bibinfo{author}{B~\surnamestart Davey\surnameend} \&
  \bibinfo{author}{H.~\surnamestart Priestley\surnameend}
  (\bibinfo{year}{2002}): \emph{\bibinfo{title}{Introduction to Lattices and
  Order}}, \bibinfo{edition}{2} edition.
\newblock \bibinfo{publisher}{Cambridge U Press},
  \doi{10.1017/CBO9780511809088}.

\bibitemdeclare{article}{Delgrande:1994}
\bibitem{Delgrande:1994}
\bibinfo{author}{J.~\surnamestart Delgrande\surnameend},
  \bibinfo{author}{T.~\surnamestart Schaub\surnameend} \&
  \bibinfo{author}{W.~\surnamestart Jackson\surnameend} (\bibinfo{year}{1994}):
  \emph{\bibinfo{title}{Alternative Approaches to Default Logic}}.
\newblock {\sl \bibinfo{journal}{Artificial Intelligence}}
  \bibinfo{volume}{70}(\bibinfo{number}{1-2}), pp. \bibinfo{pages}{167--237},
  \doi{10.1016/0004-3702(94)90106-6}.

\bibitemdeclare{article}{Froidevaux:1994}
\bibitem{Froidevaux:1994}
\bibinfo{author}{C.~\surnamestart Froidevaux\surnameend} \&
  \bibinfo{author}{J.~\surnamestart Mengin\surnameend} (\bibinfo{year}{1994}):
  \emph{\bibinfo{title}{Default Logics: {A} Unified View}}.
\newblock {\sl \bibinfo{journal}{Computational Intelligence}}
  \bibinfo{volume}{10}, pp. \bibinfo{pages}{331--369},
  \doi{10.1111/j.1467-8640.1994.tb00168.x}.

\bibitemdeclare{book}{Halmos:2009}
\bibitem{Halmos:2009}
\bibinfo{author}{S.~\surnamestart Givant\surnameend} \&
  \bibinfo{author}{P.~\surnamestart Halmos\surnameend} (\bibinfo{year}{2009}):
  \emph{\bibinfo{title}{Introduction to Boolean Algebras}}.
\newblock \bibinfo{series}{Undergraduate Texts in Mathematics},
  \bibinfo{publisher}{Springer}, \doi{10.1007/978-0-387-68436-9}.

\bibitemdeclare{incollection}{Horty:1997}
\bibitem{Horty:1997}
\bibinfo{author}{J.~\surnamestart Horty\surnameend} (\bibinfo{year}{1997}):
  \emph{\bibinfo{title}{Nonmonotonic Foundations for Deontic Logic}}.
\newblock In \bibinfo{editor}{Nute}  \cite{Nute:1997}, pp.
  \bibinfo{pages}{17--46}, \doi{10.1007/978-94-015-8851-5\_2}.

\bibitemdeclare{phdthesis}{Jipsen:1992}
\bibitem{Jipsen:1992}
\bibinfo{author}{P.~\surnamestart Jipsen\surnameend} (\bibinfo{year}{1992}):
  \emph{\bibinfo{title}{Computer-aided Investigations of Relation Algebras}}.
\newblock Ph.D. thesis.
\newblock \bibinfo{note}{Vanderbilt University}.

\bibitemdeclare{article}{Lukaszewicz:1988}
\bibitem{Lukaszewicz:1988}
\bibinfo{author}{W.~\surnamestart {\L}ukaszewicz\surnameend}
  (\bibinfo{year}{1988}): \emph{\bibinfo{title}{Considerations on Default
  Logic: An Alternative Approach}}.
\newblock {\sl \bibinfo{journal}{Computational Intelligence}}
  \bibinfo{volume}{4}, pp. \bibinfo{pages}{1--16},
  \doi{10.1111/j.1467-8640.1988.tb00086.x}.

\bibitemdeclare{book}{Makinson:2005}
\bibitem{Makinson:2005}
\bibinfo{author}{D.~\surnamestart Makinson\surnameend} (\bibinfo{year}{2005}):
  \emph{\bibinfo{title}{Bridges from Classical to Nonmonotonic Logic}}.
\newblock {\sl \bibinfo{series}{Texts in Computing}}~\bibinfo{volume}{5},
  \bibinfo{publisher}{College Publications}.

\bibitemdeclare{article}{Meyer:1988}
\bibitem{Meyer:1988}
\bibinfo{author}{J.~\surnamestart Meyer\surnameend} (\bibinfo{year}{1988}):
  \emph{\bibinfo{title}{A different approach to deontic logic: Deontic logic
  viewed as a variant of dynamic logic}}.
\newblock {\sl \bibinfo{journal}{Notre Dame Journal of Formal Logic}}
  \bibinfo{volume}{29}(\bibinfo{number}{1}), pp. \bibinfo{pages}{109--136},
  \doi{10.1305/ndjfl/1093637776}.

\bibitemdeclare{book}{Nute:1997}
\bibitem{Nute:1997}
\bibinfo{editor}{D.~\surnamestart Nute\surnameend}, editor
  (\bibinfo{year}{1997}): \emph{\bibinfo{title}{Defeasible Deontic Logic}}.
\newblock {\sl \bibinfo{series}{Synthese Library (Studies in Epistemology,
  Logic, Methodology, and Philosophy of Science)}} \bibinfo{volume}{263},
  \bibinfo{publisher}{Springer}, \doi{10.1007/978-94-015-8851-5}.

\bibitemdeclare{article}{Reiter:1980}
\bibitem{Reiter:1980}
\bibinfo{author}{R.~\surnamestart Reiter\surnameend} (\bibinfo{year}{1980}):
  \emph{\bibinfo{title}{A Logic for Default Reasoning}}.
\newblock {\sl \bibinfo{journal}{AI}}
  \bibinfo{volume}{13}(\bibinfo{number}{1-2}), pp. \bibinfo{pages}{81--132}.

\bibitemdeclare{incollection}{Royakkers:1980}
\bibitem{Royakkers:1980}
\bibinfo{author}{L.~\surnamestart Royakkers\surnameend} \&
  \bibinfo{author}{F.~\surnamestart Dignum\surnameend} (\bibinfo{year}{1997}):
  \emph{\bibinfo{title}{Defeasible Reasoning with Legal Rules}}.
\newblock In \bibinfo{editor}{Nute}  \cite{Nute:1997}, pp.
  \bibinfo{pages}{263--283}, \doi{10.1007/978-94-015-8851-5\_11}.

\bibitemdeclare{incollection}{Ryu:1980}
\bibitem{Ryu:1980}
\bibinfo{author}{Y.~\surnamestart Ryu\surnameend} \&
  \bibinfo{author}{R.~\surnamestart Lee\surnameend} (\bibinfo{year}{1997}):
  \emph{\bibinfo{title}{Deontic Logic Viewed as Defeasible Reasoning}}.
\newblock In \bibinfo{editor}{Nute}  \cite{Nute:1997}, pp.
  \bibinfo{pages}{123--138}, \doi{10.1007/978-94-015-8851-5\_6}.

\bibitemdeclare{article}{Segerberg1982}
\bibitem{Segerberg1982}
\bibinfo{author}{K.~\surnamestart Segerberg\surnameend} (\bibinfo{year}{1982}):
  \emph{\bibinfo{title}{A deontic logic of action}}.
\newblock {\sl \bibinfo{journal}{Studia Logica}}
  \bibinfo{volume}{41}(\bibinfo{number}{2}), pp. \bibinfo{pages}{269--282},
  \doi{10.1007/BF00370348}.

\bibitemdeclare{incollection}{vanderTorre:1980}
\bibitem{vanderTorre:1980}
\bibinfo{author}{L.~\surnamestart van~der Torre\surnameend} \&
  \bibinfo{author}{Y-H. \surnamestart Tan\surnameend} (\bibinfo{year}{1997}):
  \emph{\bibinfo{title}{The Many Faces of Defeasibility in Defeasible Deontic
  Logic}}.
\newblock In \bibinfo{editor}{Nute}  \cite{Nute:1997}, pp.
  \bibinfo{pages}{79--122}, \doi{10.1007/978-94-015-8851-5\_5}.

\bibitemdeclare{article}{Trypuz15}
\bibitem{Trypuz15}
\bibinfo{author}{R.~\surnamestart Trypuz\surnameend} \&
  \bibinfo{author}{P.~\surnamestart Kulicki\surnameend} (\bibinfo{year}{2015}):
  \emph{\bibinfo{title}{On deontic action logics based on Boolean algebra}}.
\newblock {\sl \bibinfo{journal}{Journal of Logic and Computation}}
  \bibinfo{volume}{25}(\bibinfo{number}{5}), pp. \bibinfo{pages}{1241--1260},
  \doi{10.1093/logcom/ext057}.

\bibitemdeclare{article}{vonWright:1951}
\bibitem{vonWright:1951}
\bibinfo{author}{G.~H.~Von \surnamestart Wright\surnameend}
  (\bibinfo{year}{1951}): \emph{\bibinfo{title}{Deontic logic}}.
\newblock {\sl \bibinfo{journal}{Mind}}
  \bibinfo{volume}{60}(\bibinfo{number}{237}), pp. \bibinfo{pages}{1--15},
  \doi{10.1093/mind/LX.237.1}.

\end{thebibliography}
\end{document}